\pgfplotsset{soldot/.style={color=myblue,only marks,mark=*}} \pgfplotsset{holdot/.style={color=myblue,fill=white,only marks,mark=*}}
\definecolor{bluishgreen}{RGB}{0,158,115}
\definecolor{vermillion}{RGB}{213,94,0}
\definecolor{myblue}{RGB}{0,114,178}
\definecolor{myorange}{RGB}{230,159,0}
\definecolor{offwhite}{RGB}{245,245,220}
\theoremstyle{definition}
\newtheorem{example}{Example}
\newtheorem{remark}{Remark}
\newtheorem{theorem}{Theorem}
\newtheorem{definition}{Definition}
\newtheorem{claim}[theorem]{Claim}
\newtheorem*{definition*}{Definition}
\newtheorem*{lemma*}{Lemma}
\newtheorem{lemma}[theorem]{Lemma}
\newtheoremstyle{thmnum}{\topsep}{\topsep}{\itshape}{0pt}{\bfseries}{.}{ }{\thmname{#1}\thmnote{ \bfseries #3}}
\theoremstyle{thmnum}
\xdef\csname vec\x \endcsname{\noexpand\ensuremath{\noexpand\bm{\x}}}
\xdef\csname vec\x \endcsname{\noexpand\ensuremath{\noexpand\bm{\x}}}
\xdef\csname c\x \endcsname{\noexpand\ensuremath{\noexpand\mathcal{\x}}}
\xdef\csname bb\x \endcsname{\noexpand\ensuremath{\noexpand\mathbb{\x}}}
\xdef\csname s\x \endcsname{\noexpand\ensuremath{\noexpand\sf{\x}}}
\xdef\csname h\x \endcsname{\noexpand\ensuremath{\noexpand\widehat{\x}}}
\xdef\csname h\x \endcsname{\noexpand\ensuremath{\noexpand\hat{\x}}}
\xdef\csname t\x \endcsname{\noexpand\ensuremath{\noexpand\tilde{\x}}}
\xdef\tzero{\noexpand\ensuremath{\noexpand\tilde{0}}}
\xdef\tone{\noexpand\ensuremath{\noexpand\tilde{1}}}
\xdef\csname b\x \endcsname{\noexpand\ensuremath{\noexpand\bar{\x}}}
\newcommand{\p}{\sf P}
\newcommand{\party}[1]{{\p}_{#1}}
\newcommand{\pa}{\party{1}}
\newcommand{\pb}{\party{2}}
\newcommand{\pc}{\party{3}}
\newcommand{\secretalphabet}[1]{\cX_{#1}}
\newcommand{\cXa}{\secretalphabet{1}}
\newcommand{\cXb}{\secretalphabet{2}}
\newcommand{\cXc}{\secretalphabet{3}}
\newcommand{\cXi}{\secretalphabet{i}}
\newcommand{\X}{X}  
\newcommand{\secretrv}[1]{\X_{#1}}
\newcommand{\Xa}{\secretrv{1}}
\newcommand{\Xb}{\secretrv{2}}
\newcommand{\Xc}{\secretrv{3}}
\newcommand{\Secrets}{\bm{\X}}
\newcommand{\x}{x}
\newcommand{\secret}[1]{\x_{#1}}
\newcommand{\xa}{\secret{1}}
\newcommand{\xb}{\secret{2}}
\newcommand{\xc}{\secret{3}}
\newcommand{\secrets}{\bm{\x}}
\newcommand{\salp}{\cW}
\newcommand{\sharealphabet}[2]{\salp_{#1#2}}
\newcommand{\salpab}{\sharealphabet{1}{2}}
\newcommand{\salpbc}{\sharealphabet{2}{3}}
\newcommand{\salpca}{\sharealphabet{3}{1}}
\newcommand{\salpij}{\sharealphabet{i}{j}}
\newcommand{\twosharealphabet}[1]{\salp_{#1}}
\newcommand{\salpa}{\twosharealphabet{1}}
\newcommand{\salpb}{\twosharealphabet{2}}
\newcommand{\salpc}{\twosharealphabet{3}}
\newcommand{\salpi}{\twosharealphabet{i}}
\newcommand{\Share}{W}
\newcommand{\sharerv}[2]{\Share_{#1#2}}
\newcommand{\Sab}{\sharerv{1}{2}}
\newcommand{\Sbc}{\sharerv{2}{3}}
\newcommand{\Sca}{\sharerv{3}{1}}
\newcommand{\Sij}{\sharerv{i}{j}}
\newcommand{\twosharerv}[1]{W_{#1}}
\newcommand{\Sa}{\twosharerv{1}}
\newcommand{\Sb}{\twosharerv{2}}
\newcommand{\Sc}{\twosharerv{3}}
\newcommand{\Si}{\twosharerv{i}}
\newcommand{\Shares}{\bm{\Share}}
\newcommand{\share}{w}
\newcommand{\twoshareinst}[1]{w_{#1}}
\newcommand{\ssa}{\twoshareinst{1}}
\newcommand{\ssi}{\twoshareinst{i}}
\newcommand{\shares}{\bm{\share}}
\newcommand{\scheme}{\Pi}
\newcommand{\encoder}{\psi}
\newcommand{\decoder}{\phi}
\newcommand{\decoderplayer}[1]{\decoder_{#1}}
\newcommand{\decodera}{\decoderplayer{1}}
\newcommand{\decoderb}{\decoderplayer{2}}
\newcommand{\decoderc}{\decoderplayer{3}}
\newcommand{\rc}{\rho}
\newcommand{\pins}{\cP_{\cS}}
\newcommand{\A}{a}
\newcommand{\symbola}[1]{\A_{#1}}
\newcommand{\aone}{\symbola{1}}
\newcommand{\atwo}{\symbola{2}}
\newcommand{\athree}{\symbola{3}}
\newcommand{\afour}{\symbola{4}}
\newcommand{\B}{b}
\newcommand{\symbolb}[1]{\B_{#1}}
\newcommand{\bone}{\symbolb{1}}
\newcommand{\btwo}{\symbolb{2}}
\newcommand{\bthree}{\symbolb{3}}
\newcommand{\bfour}{\symbolb{4}}
\newcommand{\bfive}{\symbolb{5}}
\newcommand{\bsix}{\symbolb{6}}
\newcommand{\C}{c}
\newcommand{\symbolc}[1]{\C_{#1}}
\newcommand{\cone}{\symbolc{1}}
\newcommand{\ctwo}{\symbolc{2}}
\newcommand{\cthree}{\symbolc{3}}
\newcommand{\cfour}{\symbolc{4}}
\newcommand{\cfive}{\symbolc{5}}
\newcommand{\csix}{\symbolc{6}}
\newcommand{\Prob}{\mathbb{P}}
\def\BibTeX{{\rm B\kern-.05em{\sc i\kern-.025em b}\kern-.08em
    T\kern-.1667em\lower.7ex\hbox{E}\kern-.125emX}}
\begin{document}
\title{Randomness Requirements for\\ Three-Secret Sharing}


\makeatletter
\newcommand{\linebreakand}{%
  \end{@IEEEauthorhalign}
  \hfill\mbox{}\par
  \mbox{}\hfill\begin{@IEEEauthorhalign}
}
\makeatother

\author{\IEEEauthorblockN{Hari Krishnan P. Anilkumar}
\IEEEauthorblockA{TIFR, India}
\and
\IEEEauthorblockN{Aayush Rajesh}
\IEEEauthorblockA{IIT Bombay, India}
\and
\IEEEauthorblockN{Varun Narayanan}
\IEEEauthorblockA{Technion, Israel}
\and
\IEEEauthorblockN{Manoj M. Prabhakaran}
\IEEEauthorblockA{IIT Bombay, India}
\and
\IEEEauthorblockN{Vinod M. Prabhakaran}
\IEEEauthorblockA{TIFR, India}

\thanks{HA and VP were supported by DAE under project no. RTI4001. VN was supported by ERC Project NTSC (742754) and ISF Grants 1709/14 and 2774/20. MP was supported by IITB Trust Lab. VP was additionally supported by SERB through project MTR/2020/000308.}
}

\maketitle

\begin{abstract}
We study a secret sharing problem with three secrets where the secrets are allowed to be related to each other, i.e., only certain combinations of the three secrets are permitted. The dealer produces three shares such that every pair of shares reveals a unique secret and reveals nothing about the other two secrets, other than what can be inferred from the revealed secret. For the case of binary secrets, we exactly determine the minimum amount of randomness required by the dealer, for each possible set of permitted combinations. Our characterization is based on new lower and upper bounds.
\end{abstract}

\section{Introduction}\label{sec:introduction}
In this paper we study Multi-Secret Sharing (MSS), a variation on the secret sharing problem in which multiple secrets are shared simultaneously. The secrets are a priori known to be {\em related} in the sense that they are guaranteed to come from a known domain of secrets. Specifically, we study the following 3-secret, 3-shares problem, denoted as 3SS: There are three secrets with alphabets $\cX, \cY, \cZ$ respectively.
The secrets $(\xa,\xb,\xc)$ are known to be from a subset $\cS\subseteq\cX\times\cY\times\cZ$. The dealer produces three different shares $\Sab,\Sbc,\Sca$ such that every pair of shares reveals a unique secret and nothing more. Specifically, shares $\Sab,\Sca$ must reveal the secret $\xa$, but no additional information about $\xb,\xc$ other than what can be inferred from learning $\xa$ and the fact that the secrets belong to $\cS$. Similarly, shares $\Sab,\Sbc$ must reveal $\xb$ and nothing more; and $\Sbc,\Sca$ must recover $\xc$ and nothing more.  Clearly, on learning all three shares, all the three secrets will be recoverable. There are no restrictions on what a single share can reveal except what is implied by the restrictions on pairs of shares.

Each subset $\cS$ corresponds to a different secret-sharing problem.  There
are two fundamental measures of efficiency of a secret-sharing scheme: the
sizes of the individual shares, and the overall amount of randomness
required by the dealer to prepare the shares.  The latter, called the
randomness complexity, can also be interpreted as the amount of information,
in addition to the secrets, that the dealer needs to store if it were to
reproduce the same shares at a later point.%
\footnote{This would be relevant in a setting where the shares are left on
the ``cloud'' and downloaded asynchronously by users at various points in
time.}
In contrast to much of the work on secret-sharing, our focus is on the
randomness complexity. Our main contribution is to \emph{completely resolve}
the question of randomness complexity of 3SS for
the most basic setting with binary secrets: We do this for 
\emph{every possible
domain} with binary secrets -- i.e., for each
$\cS\subseteq\{0,1\}\times\{0,1\}\times\{0,1\}$. This involves coming up
with new schemes, as well as new lower bounds.


Secret sharing with multiple secrets has been studied in the literature in various contexts\cite{CACM:McESar81,IT:KaGrHe83,C:BlaMea84,STACS:BluDeSVacc93,C:JacMarOKe93,C:BDDGV94,AC:JacMarOKe94,JacMarOKe96,Mas06,IT:DaPrPr16}. Some of the earlier works considered the setting where all secrets have the same access structure\cite{CACM:McESar81,IT:KaGrHe83,C:BlaMea84} (and hence the multiple secrets can be considered a single larger secret) and identified that ``ramp'' access structures -- which allow sets which are neither ``qualified'' nor ``forbidden'' -- as a means to keep the share sizes from growing compared to when sharing a single secret. We shall be concerned with a more general notion, where the different secrets can have different access structures, as was studied in later works. There have been several definitions of security for multi-secret sharing schemes, all of which considered secrets as random variables. But as shown by Masucci \cite{Mas06}, the security definitions used in the prior work -- the restrictive ``single-secret security'' (which could hold only if the secrets were independent of each other) and the more generally applicable ``multiple-secrets security'' -- implicitly depended only on the support of the joint distribution of the secrets. In this work we explicitly reformulate the latter definition solely in terms of the domain of the secrets, without any reference to a distribution of the secrets.



We point out that 3SS considers a specific access structure. One of our motivations for studying this model is that it arises naturally in the context of secure multi-party computation (MPC). In particular, in~\cite{HaNaPrPr23}, using a reduction of the problem at hand for a particular domain of secrets ($\cS=\{0,1\}^3 \setminus \{111\}$) to the secure multi-party computation (MPC) of the boolean function {\tt AND} by three parties with 1-privacy, we obtain a randomness complexity lower bound of 3~bits for secure MPC of {\tt AND} which improves the bound of $1$ bit due to Kushilevitz et al.~\cite{DM:KuOsEmRoThVe21} (who also showed an upper bound of 7 bits).  


The work which is most closely related to the present one is~\cite{IT:DaPrPr16}, where information-theoretic bounds on share sizes and randomness complexity for a variant of 3SS called {\em correlated multi-secret sharing} in which the secrets are distributed according to a given distribution were developed (see \cite[Appendix~B]{IT:DaPrPr16}). Those bounds were obtained for specific forms of the joint distribution of the secrets -- the choice of these distributions were motivated by a three-party secure MPC problem where two of the parties have inputs (and no outputs) and the third party securely computes a function of these inputs. The techniques we use to obtain information-theoretic lower bounds on randomness complexity in section~\ref{sec:lowerbounds} are similar to those used there.


For each $\cS\subseteq\{0,1\}\times\{0,1\}\times\{0,1\}$, we provide 3SS schemes and show their optimality in terms of randomness complexity (Section~\ref{sec:binary}). In all but two cases, our proof of optimality is based on a general information-theoretic lower bound on randomness complexity (Section~\ref{sec:lowerbounds}). For the remaining two cases where this lower bound is loose, we give a combinatorial argument. 


\section{Problem Setup}\label{sec:setup}
\begin{definition} \label{def:distribution-scheme}
Let $\cS\subseteq\cXa\times\cXb\times\cXc$ be a set of secrets, where $\cXa,\cXb,\cXc$ are finite. 
A {\em distribution scheme} $\scheme = (P_R,\encoder)$ with domain of secrets $\cS$ is a pair such that $P_R$ is a probability distribution over a finite set $\cR$, 
and $\encoder$ is a map $\encoder\colon\cS \times \cR \to \salpab\times\salpbc\times\salpca$, where $\salpij$ is the domain of the share $\Sij$, $(i,j)\in\{(1,2),(2,3),(3,1)\}$.
Given a secret $(\xa,\xb,\xc)\in\cS$, the dealer distributes it according to $\scheme$ by sampling $R$ according to $P_R$ and computing $(\Sab,\Sbc,\Sca)=\encoder((\xa,\xb,\xc),R)$. i.e., denoting $\Shares=(\Sab,\Sbc,\Sca)$, $\Secrets=(\Xa,\Xb,\Xc)$, the dealer generates the shares according to the conditional distribution
\begin{align} 
P_{\Shares|\Secrets}(\shares|\secrets) &= P_R(\{r\in\cR:\encoder(\secrets,r)=\shares\}), \label{eq:dealer-distr}
\end{align}
for $\secrets\in\cS$ and $\shares\in\salpab\times\salpbc\times\salpca$.

Party $\pa$ has access to $\Sa:=(\Sab,\Sca)$, party $\pb$ to $\Sb:=(\Sbc,\Sab)$, and party $\pc$ to $\Sc:=(\Sca,\Sbc)$.
Let $\salpa:=\salpab\times\salpca$ denote the alphabet of the shares party~$\pa$ has access to. Similarly, let
$\salpb:=\salpbc\times\salpab$ and $\salpc:=\salpca\times\salpbc$.
\end{definition} 

\begin{definition}\label{def:rss}
A distribution scheme $\scheme=(P_R,\encoder)$ is said to be a {\em  three-secret sharing (3SS) scheme} for its domain of secrets $\cS$ if the following hold:\\
{\bf Correctness.} For each party $\p_i$, $i\in[3]$, where $[3]:=\{1,2,3\}$, there is a reconstruction function $\decoder_i\colon \salpi \to \cXi$ such that, for $(\xa,\xb,\xc)\in\cS$, 
\begin{align} \Prob(\decoder_i(\Si)=\x_i)=1,\label{eq:correct} \end{align}
where the probability is over the random choice of $R\sim P_R$.\\
{\bf Perfect Privacy.} For each party $\p_i$, $i\in[3]$ and any two secret vectors $\secrets=(\xa,\xb,\xc)\in\cS$ and $\secrets'=(\xa',\xb',\xc')\in\cS$ such that $\x_i=\x_i'$, the shares $\Si$ that party $\p_i$ has access to must be identically distributed, i.e., (recall that  $\Sa:=(\Sab,\Sca)$ and so on)
\begin{align} P_{\Si|\Secrets}(\ssi|\secrets) =  P_{\Si|\Secrets}(\ssi|\secrets'), \quad \ssi\in\salpi.\label{eq:privacy-Pi}\end{align}
We refer to this as {\em privacy against party} $\p_i$.
\end{definition}

\begin{definition}\label{def:complexity}
The randomness complexity of a 3SS scheme $\scheme=(P_R,\encoder)$ is $\log|\cR|$, the number of bits of randomness it uses. For a set $\cS\subseteq\cXa\times\cXb\times\cXc$, the {\em randomness complexity} $\rho({\cS})$ is defined as the smallest randomness complexity of all 3SS schemes with domain of secrets $\cS$. We write $\rho$ when the domain of secrets is clear from the context.
\end{definition}

The following lemma gives an alternative definition for 3SS. The proof, which is elementary, is omitted in the interest of space.
\begin{lemma}\label{lem:ITdefinition}
A distribution scheme $\scheme=(P_R,\encoder)$ is a 3SS scheme for its domain of secrets $\cS$ if and only if for all joint distributions $P_{\Secrets}$ whose support is a subset of $\cS$ and $P_{\Shares|\Secrets}$ of \eqref{eq:dealer-distr}, the following hold (recall that $\Sa:=(\Sab,\Sca)$ and so on):
\begin{align}
&\text{(Correctness)} & H(\X_i|\Si)&=0,  &&i\in[3], \label{eq:ITcorrect}\\
&\text{(Privacy)} & I(\Si;\X_{[3]\setminus\{i\}}|\X_i)&=0, &&i\in[3]. \label{eq:ITprivacy}
\end{align}
\end{lemma}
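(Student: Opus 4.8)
The plan is to establish the two equivalences—correctness and privacy—separately, since the information-theoretic conditions \eqref{eq:ITcorrect}--\eqref{eq:ITprivacy} split cleanly along these lines. Throughout, I would keep in mind that the conditional law $\pss$ is fixed by the scheme via \eqref{eq:dealer-distr}, and that only the prior $P_{\Secrets}$ (with support contained in $\cS$) is being quantified over; the relevant joint law is $P_{\Secrets}\cdot\pss$.

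For correctness, I would first argue the forward direction. If a reconstruction map $\decoder_i$ satisfying \eqref{eq:correct} exists, then for every $\secrets\in\cS$ and every $r$ in the support of $P_R$, the share seen by $\party{i}$ satisfies $\decoder_i(\ssi)=\x_i$; hence $\X_i$ is a deterministic function of $\Si$ under any prior, giving $H(\X_i\mid\Si)=0$. For the converse I would exploit the ``for all priors'' quantifier by fixing a full-support prior $P_{\Secrets}$ on $\cS$. Then $H(\X_i\mid\Si)=0$ forces, for each realizable share value $\ssi$, that all secrets assigning it positive probability share a common value of $\x_i$; defining $\decoder_i(\ssi)$ to be that common value (and arbitrarily elsewhere) recovers \eqref{eq:correct} for every secret in $\cS$.

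For privacy, the forward direction uses that \eqref{eq:privacy-Pi} says $P_{\Si\mid\Secrets}(\cdot\mid\secrets)$ depends on $\secrets$ only through $\x_i$. Consequently, conditioned on $\X_i=\x_i$, the share $\Si$ is independent of $\X_{[3]\setminus\{i\}}$ under any prior, so $I(\Si;\X_{[3]\setminus\{i\}}\mid\X_i)=0$. For the converse I would again fix a full-support prior and expand the conditional mutual information as a nonnegative sum over values $\x_i$; its vanishing forces $\Si$ to be independent of $\X_{[3]\setminus\{i\}}$ given each $\X_i=\x_i$, i.e.\ $P_{\Si\mid\Secrets}(\cdot\mid\secrets)$ agrees for all $\secrets\in\cS$ sharing that $\x_i$, which is exactly \eqref{eq:privacy-Pi}.

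The only real subtlety—and where I would be most careful—is the passage from the information-theoretic conditions back to the combinatorial ones: a single prior need not place positive weight on every secret, so the pointwise statements (over all pairs $\secrets,\secrets'$) could fail to be ``seen'' by that prior. The clean fix is that the IT conditions are assumed for \emph{all} priors, so it suffices to test against one full-support prior, under which positivity makes every secret and every share-collision visible; the entropy and mutual-information equalities then translate directly into the required determinism and conditional-independence statements. Everything else is a routine unpacking of $H(\X_i\mid\Si)=0$ as ``$\X_i$ is a function of $\Si$'' and of $I(\Si;\X_{[3]\setminus\{i\}}\mid\X_i)=0$ as conditional independence.
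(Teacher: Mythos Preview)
Your argument is correct and is precisely the elementary verification the paper alludes to; the paper itself omits the proof, so there is nothing further to compare. The one point you handle well---and which is the only place a careless argument could go wrong---is using a full-support prior on $\cS$ to extract the pointwise combinatorial conditions \eqref{eq:correct} and \eqref{eq:privacy-Pi} from the vanishing entropy and mutual-information conditions.
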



\section{Information-Theoretic Lower Bounds}\label{sec:lowerbounds}

\begin{definition}[\cite{IT:WoWu08} (also see \cite{IT:PrPr14})]\label{def:RI}
    The {\em residual information} of jointly distributed random variables $X,Y$ is defined as
    \begin{align}
        RI(X;Y)&:= \min_{\substack{p_{Q|X,Y}: H(Q|X)=H(Q|Y)=0}} I(X;Y|Q). \label{eq:RIdef}
    \end{align}
\end{definition}
This is in fact the difference of $I(X;Y)$ and the G\'acs-K\"orner's common information~\cite{GaKo73} of $X,Y$. In \eqref{eq:RIdef},  an optimal choice for $Q$ is the maximal common random variable of G\'acs-K\'orner (i.e., the largest entropy random variable which can be written as a function of $X$ and a function of $Y$)~\cite{IT:WoWu08,IT:PrPr14}. 

Let $\pins=\{P_{\Secrets}:P_{\Secrets}(\secrets)=0 \text{ if } \secrets\notin\cS\}$ be the set of distributions $P_{\Secrets}$ whose support is a subset of $\cS$. For $i\in[3]$ and a distribution $P$ defined on $\cXi$, we define
\begin{align*}
\pins^i(P):=\left\{P_{\Secrets} \in \pins : P_{\X_i}=P\right\},
\end{align*}
i.e., the set of all distributions $P_{\Secrets}$ whose support is a subset of $\cS$ and the marginal $P_{\X_i}$ is $P$. 

The following is our information-theoretic lower bound on randomness complexity. It holds for all 3SS schemes in Definition~\ref{def:rss} (and not just when the secrets are binary).
\begin{theorem}\label{thm:RClowerbound}
Let $\scheme=(P_R,\encoder)$ be a 3SS scheme with domain of secrets $\cS$. 
\begin{align}
&\rc \geq\notag\\
&\sup_{P_{\Secrets}\in\pins} \Bigg\{\left[\sup_{P_{\Secrets'}\in\pins^2(P_{\Xb})} H(\Xa',\Xb'|\Xc') + RI(\Xb';\Xc')\right]\notag\\
&\qquad\quad\;+\left[\sup_{P_{\Secrets''}\in\pins^1(P_{\Xa})} H(\Xa'',\Xc''|\Xb'') + RI(\Xa'';\Xb'')\right]\notag\\
&\qquad\quad\;-  H(\Xa)\Bigg\}
\label{eq:LB2}\\
&\rc \geq\notag\\
&\sup_{P_{\Secrets}\in\pins} \Bigg\{ \left[\sup_{P_{\Secrets'}\in\pins^1(P_{\Xa})} H(\Xa',\Xb'|\Xc') + RI(\Xa';\Xc')\right]\notag\\
&\qquad\quad\;+\left[\sup_{P_{\Secrets''}\in\pins^1(P_{\Xa})} H(\Xa'',\Xc''|\Xb'') + RI(\Xa'';\Xb'')\right]\notag\\
&\qquad\quad\;-  H(\Xa)\Bigg\}
\label{eq:LB1}
\end{align}
\end{theorem}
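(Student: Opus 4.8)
The plan is to bound the randomness complexity from below by the conditional entropy of the shares given the secrets, and then to extract the right-hand side from this conditional entropy using the information-theoretic reformulation of Lemma~\ref{lem:ITdefinition}.

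\textbf{Step 1 (reduction).} Fix any $P_{\Secrets}\in\pins$. Since $R\sim P_R$ is drawn independently of $\Secrets$ and $\Shares=\encoder(\Secrets,R)$ is a deterministic function of $(\Secrets,R)$, we have $\rc=\log|\cR|\ge H(R)\ge I(R;\Shares\mid\Secrets)=H(\Shares\mid\Secrets)$, the last equality because $H(\Shares\mid\Secrets,R)=0$. Thus it suffices to lower bound $H(\Shares\mid\Secrets)$, and crucially this holds for \emph{every} admissible secret distribution, which is what lets us optimize over $P_{\Secrets}$ and, below, over several distributions at once. From Lemma~\ref{lem:ITdefinition} I record two consequences I will use repeatedly: (i) correctness, $H(\X_i\mid\Si)=0$, so each secret is a function of the shares and $H(\Secrets\mid\Shares)=0$; and (ii) privacy, $I(\Si;\X_{[3]\setminus\{i\}}\mid\X_i)=0$, which implies that conditioning a function of the view $\Si$ on the full secret vector equals conditioning on $\X_i$ alone (e.g.\ $P_{\Sa\mid\Secrets}=P_{\Sa\mid\Xa}$), and that a single share, lying in two views, inherits this invariance from both.

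\textbf{Step 2 (peeling to view-residuals).} Using $H(\Secrets\mid\Shares)=0$ and the chain rule I would first isolate the central secret $\Xa$ via the backbone identity $H(\Shares\mid\Secrets)=H(\Shares\mid\Xa)-H(\Xb,\Xc\mid\Xa)$; the explicit $-H(\Xa)$ in the statement is the residue of this step. I then expand $H(\Shares\mid\Xa)$ by peeling party $\pa$'s view $\Sa=(\Sab,\Sca)$ and the remaining share $\Sbc$, using (i) to read off the decoded secrets (for instance $\Xc=\decoderc(\Sca,\Sbc)$) and (ii) to replace conditioning on $\Sa$ by conditioning on $\Xa$. This separates the decodable secret-entropy contributions, which assemble into the terms $H(\Xa,\Xb\mid\Xc)$ and $H(\Xa,\Xc\mid\Xb)$, from the ``view-residual'' entropies such as $H(\Sa\mid\Xa)$ and $H(\Sbc\mid\Xc,\Sa)$ that measure randomness in the shares beyond what the secrets explain; these residuals are where the residual-information terms must come from.

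\textbf{Step 3 (residual information from a shared share --- the crux).} The heart of the argument is to show that the share common to two views forces a randomness overhead of at least the residual information of the two secrets those views reconstruct. For the pair $\{\pa,\pc\}$ sharing $\Sca$, I would take $Q^{\ast}$ to be the G\'acs--K\"orner common part of $(\Xa,\Xc)$, so that $RI(\Xa;\Xc)=I(\Xa;\Xc\mid Q^{\ast})$ by the characterization recalled after Definition~\ref{def:RI}. Since $Q^{\ast}$ is simultaneously a function of $\Xa=\decodera(\Sa)$ and of $\Xc=\decoderc(\Sc)$, it is a common function of the two views; combining this with the privacy relations $I(\Sa;\Xc\mid\Xa)=0$ and $I(\Sc;\Xa\mid\Xc)=0$ and a data-processing step (given the common share $\Sca$, the secrets $\Xa,\Xc$ are functions of the \emph{disjoint} remaining shares $\Sab,\Sbc$) should yield that the relevant view-residual is at least $I(\Xa;\Xc\mid Q^{\ast})=RI(\Xa;\Xc)$. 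The same template applied to $\{\pa,\pb\}$ (sharing $\Sab$) produces $RI(\Xa;\Xb)$ and to $\{\pb,\pc\}$ (sharing $\Sbc$) produces $RI(\Xb;\Xc)$; \eqref{eq:LB1} routes through the first two pairs and \eqref{eq:LB2} through $\{\pb,\pc\}$ and $\{\pa,\pb\}$. I expect this RI step to be the main obstacle, precisely because the common share is itself \emph{not} a common function of the two secrets, so one cannot bound $I(\Xa;\Xc\mid\Sca)$ against $RI$ directly and must route the argument through $Q^{\ast}$.

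\textbf{Step 4 (combining two distributions).} Finally, the two bracketed suprema are evaluated at possibly \emph{different} secret distributions: in \eqref{eq:LB1} both are constrained to $\pins^1(P_{\Xa})$, matching that both chosen pairs contain $\pa$, whereas in \eqref{eq:LB2} the first ranges over $\pins^2(P_{\Xb})$ (pair $\{\pb,\pc\}$, which needs $\pb$'s view) and the second over $\pins^1(P_{\Xa})$. To make the \emph{sum} of two differently-distributed terms a valid lower bound on the single quantity $\rc$, I would prove the combined inequality for every compatible pair $(P_{\Secrets'},P_{\Secrets''})$ and only then take suprema. The bridge is again Step~1(ii): privacy against a party fixes the conditional law of that party's joint share given its secret independently of the secret distribution, so the residual attached to each pair can be analysed under its own joint $P_{\Xb,\Xc\mid\Xa}$ while remaining consistent on the shared marginal that both are pinned to. Establishing that the two pair-analyses combine without dropping a cross term --- and doing so for \eqref{eq:LB2}, where the two fixed marginals differ --- is the second main difficulty, and is exactly what makes the bound strictly stronger than the maximum of the corresponding single-distribution bounds.
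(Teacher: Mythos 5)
Your plan tracks the paper's proof move for move---your Step~1 is the opening chain of Claim~\ref{cl:firstLB1}, Step~3 targets the paper's Claim~\ref{cl:useRI}, and Step~4 is exactly the paper's Claim~\ref{cl:switch-distr}---but both places you yourself flag as the crux are left as intentions, and in both cases the mechanics you actually write down would not go through. In Step~3, conditional data processing does give $I(\Sab;\Sbc|\Sca)\geq I(\Xa;\Xc|\Sca)$ (given $\Sca$, the secret $\Xa$ is a function of $\Sab$ and $\Xc$ of $\Sbc$), but nothing you describe then connects $I(\Xa;\Xc|\Sca)$ to $RI(\Xa;\Xc)=I(\Xa;\Xc|Q^{*})$. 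Your observation that $Q^{*}$ is a common function of the two views points the wrong way: feasibility of $Q^{*}$ in the minimization of Definition~\ref{def:RI} lower-bounds $I(\Sa;\Sc|Q^{*})$ by $RI(\Sa;\Sc)$, which is not a quantity you hold. The missing ingredient---the actual content of the secure data processing inequality, Lemma~\ref{lem:secDPI}, which the paper imports from Wolf--Wullschleger rather than reproving---is that privacy forces the \emph{double} Markov chains $\Sca-\Xa-\Xc$ and $\Sca-\Xc-\Xa$ (since $I(\Sca;\Xc|\Xa)\le I(\Sa;\Xc|\Xa)=0$, and symmetrically), from which one deduces that $\Sca$ is conditionally independent of $(\Xa,\Xc)$ given their G\'acs--K\"orner common part, and only then does $I(\Xa;\Xc|\Sca)\ge I(\Xa;\Xc|Q^{*})$ follow. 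Without this double-Markov step your Step~3 restates the goal; the paper's route instead chains $I(\Sab;\Sbc|\Sca)\geq RI(\Sab,\Sca,\Xa;\Sbc,\Sca,\Xc)$ (using that $\Sca$ itself is a common function of the two views) and then applies Lemma~\ref{lem:secDPI}.

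The second gap is the interaction between your Steps~2 and~4. The switching argument (Claim~\ref{cl:switch-distr}) applies only to functionals of a \emph{single party's pairwise view distribution}: privacy against $\pa$ pins $P_{\Sab,\Sca}$ to $P_{\Xa}$, privacy against $\pb$ pins $P_{\Sab,\Sbc}$ to $P_{\Xb}$. The paper engineers its backbone $\rc\geq H(\Sab|\Sbc)+H(\Sca|\Sab)-H(\Xa)$ precisely so that each surviving entropy term is a functional of one such view (it discards $\Sbc$ \emph{before} invoking privacy, and only at the end weakens $H(\Sab)$ to $H(\Sab|\Sbc)$). Your backbone identity $H(\Shares|\Secrets)=H(\Shares|\Xa)-H(\Xb,\Xc|\Xa)$ is correct, but the residuals you name, e.g.\ $H(\Sbc|\Xc,\Sa)$, involve all three shares; their values depend on the full joint $P_{\Shares}$, which is \emph{not} invariant under a marginal-preserving change of $P_{\Secrets}$, so Step~4 cannot be applied to them---the ``cross term'' worry you raise is real, not cosmetic. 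The repair is the paper's order of operations: prove $H(\Sca|\Sab)\geq H(\Xa,\Xc|\Xb)+RI(\Xa;\Xb)$ as in \eqref{eq:bound-cond-entropy2} for \emph{every} admissible secret distribution (the three-share quantities, including $I(\Sca;\Sbc|\Sab)$, live only inside this universally valid inequality), then use Claim~\ref{cl:switch-distr} to equate the switchable left-hand side across distributions with matching marginal, and only then invoke the bound under the switched distribution. With those two repairs your outline coincides with the paper's proof; as written, it asserts rather than proves its two hard steps.
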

Note that $RI(X;Y)$ is not a continuous function of the joint distribution $P_{X,Y}$. Hence, the $\sup$ above may not be replaced by $\max$ in general. 
The bounds in \eqref{eq:LB1} and \eqref{eq:LB2} are clearly asymmetric; permuting the indices $(1,2,3)$ give additional bounds on $\rc$.

\begin{remark}\label{rem:tensorize}
Consider 3SS for a domain of secrets which is an $n$-fold product $\cS^n$. Clearly, $\rc(\cS^n)\leq n\rc(\cS)$ since $n$ independent invocations of a 3SS for $\cS$ is a 3SS for $\cS^n$. The fact that conditional entropy and residual information tensorize\footnote{By ``tensorize'' we mean that the quantities for independent instances is the sum of the quantities for each instance. e.g., if $(U_1,V_1)$ and $(U_2,V_2)$ are independent, $H(U_1,U_2|V_1,V_2)=H(U_1|V_1)+H(U_2|V_2)$ and $RI(U_1,U_2;V_1,V_2)=RI(U_1;V_1)+RI(U_2;V_2)$.}~\cite{IT:WoWu08} implies that $n$~times the lower bounds in Theorem~\ref{thm:RClowerbound} for $\rc(\cS)$ are lower bounds for $\rc(\cS^n)$. Thus, the lower bounds in Theorem~\ref{thm:RClowerbound} are in fact lower bounds on $\frac{1}{n}\rc(\cS^n)$ for any $n$ and in particular as $n\rightarrow\infty$. Whenever it is tight for the single-shot case, we may conclude that $\rc(\cS^n)=n\rc(\cS)$ for all $n$. See the discussion in Section~\ref{sec:discussion} on an interesting question we leave open in this connection.
\end{remark}
\begin{proof}
We show \eqref{eq:LB2} here and prove \eqref{eq:LB1} in Appendix~\ref{app:lowerbound} along similar lines. Let $\scheme=(P_R,\encoder)$ be a 3SS scheme with domain of secrets $\cS$. Fix a $P_{\Secrets}\in\pins$.
These joint distribution over the secrets and shares is
\begin{align}
P_{\Secrets,\Shares}(\secrets,\shares)&= P_{\Secrets}(\secrets) P_R(\{r\in\cR:\encoder(\secrets,r)=\shares\}).\label{eq:joint-dist}
\end{align}
\begin{claim} \label{cl:firstLB1}
\begin{align} \log|\cR| \geq H(\Sab|\Sbc) + H(\Sca|\Sab) - H(\Xa)\label{eq:firstLB2}\end{align}
\end{claim}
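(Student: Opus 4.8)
The plan is to sandwich $\log|\cR|$ between the dealer's randomness and the conditional entropy of just \emph{two} of the three shares, and then evaluate that conditional entropy exactly using privacy against $\pa$. First I would observe that $\Sa=(\Sab,\Sca)$ is a deterministic function of $(\Secrets,R)$ through $\encoder$, so $H(\Sab,\Sca\mid\Secrets)\le H(\Sab,\Sca,R\mid\Secrets)=H(R\mid\Secrets)\le H(R)\le\log|\cR|$. Hence it suffices to prove $H(\Sab,\Sca\mid\Secrets)\ge H(\Sab\mid\Sbc)+H(\Sca\mid\Sab)-H(\Xa)$. Note that I deliberately drop $\Sbc$ at this stage, since it does not appear on the right-hand side; this choice turns out to be what keeps the later cancellation clean.

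Next I would expand by the chain rule, $H(\Sab,\Sca\mid\Secrets)=H(\Sab\mid\Secrets)+H(\Sca\mid\Sab,\Secrets)$, and evaluate each term using only privacy against $\pa$. By Lemma~\ref{lem:ITdefinition} applied to the fixed $P_{\Secrets}$ we have $I(\Sa;\Xb,\Xc\mid\Xa)=0$, which by the chain rule for mutual information forces both $I(\Sab;\Xb,\Xc\mid\Xa)=0$ and $I(\Sca;\Xb,\Xc\mid\Sab,\Xa)=0$. Combining the first with correctness $H(\Xa\mid\Sa)=0$ gives $I(\Sab;\Secrets)=I(\Sab;\Xa)=H(\Xa)-H(\Xa\mid\Sab)$, so that $H(\Sab\mid\Secrets)=H(\Sab)-H(\Xa)+H(\Xa\mid\Sab)$. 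Combining the second with correctness gives $I(\Sca;\Secrets\mid\Sab)=I(\Sca;\Xa\mid\Sab)=H(\Xa\mid\Sab)$, so that $H(\Sca\mid\Sab,\Secrets)=H(\Sca\mid\Sab)-H(\Xa\mid\Sab)$.

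Adding these two identities, the two copies of $H(\Xa\mid\Sab)$ cancel and I obtain $H(\Sab,\Sca\mid\Secrets)=H(\Sab)+H(\Sca\mid\Sab)-H(\Xa)$. Since conditioning cannot increase entropy, $H(\Sab)\ge H(\Sab\mid\Sbc)$, and substituting yields exactly the claimed bound $\log|\cR|\ge H(\Sab\mid\Sbc)+H(\Sca\mid\Sab)-H(\Xa)$.

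The crux, and the only place the argument can go wrong, is the bookkeeping of the privacy-induced conditional independences: one must check that privacy against $\pa$ \emph{alone} suffices, and that it collapses both $I(\Sab;\Secrets)$ and $I(\Sca;\Secrets\mid\Sab)$ onto $\Xa$ so that the spurious $H(\Xa\mid\Sab)$ terms annihilate, leaving only $H(\Xa)$. I expect the main obstacle to be resisting the more symmetric-looking expansion that conditions $\Sab$ on $\Sbc$ rather than on $\Secrets$: that route instead produces $I(\Sab;\Secrets\mid\Sbc)=H(\Xb\mid\Sbc)$, an $\Xb$-term that fails to cancel and invokes privacy against $\pb$, leaving a residual that is not manifestly nonnegative. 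Retaining $\Sab\mid\Secrets$ and discarding $\Sbc$ early is precisely what makes the telescoping exact.
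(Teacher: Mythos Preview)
Your proof is correct and follows essentially the same route as the paper: both bound $\log|\cR|\ge H(\Sab,\Sca\mid\Secrets)$, use privacy against $\pa$ to reduce conditioning on $\Secrets$ to conditioning on $\Xa$, invoke correctness $H(\Xa\mid\Sab,\Sca)=0$, and finish with $H(\Sab)\ge H(\Sab\mid\Sbc)$. The paper's bookkeeping is slightly more direct---it writes $H(\Sab,\Sca\mid\Secrets)=H(\Sab,\Sca\mid\Xa)=H(\Sab,\Sca)-H(\Xa)$ in two strokes rather than splitting via the chain rule and introducing then canceling $H(\Xa\mid\Sab)$---but the argument is the same.
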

\begin{proof}
\begin{align*}\log|\cR| &\geq H(R) = H(R|\Secrets)\stackrel{(a)}{=} H(R,\Shares|\Secrets),\end{align*}
where (a) follows from $\Shares=\encoder(\Secrets,R)$. Further,  $H(R,\Shares|\Secrets)$ $\geq H(\Shares|\Secrets) \geq H(\Sab,\Sca|\Secrets) \stackrel{(b)}{=} H(\Sab,\Sca|\Xa)$, where (b) follows from the privacy condition \eqref{eq:ITprivacy} for $i=1$.
But, $H(\Sab,\Sca|\Xa) = H(\Sab,\Sca,\Xa) - H(\Xa) \stackrel{(c)}{=} H(\Sab,\Sca) - H(\Xa)=H(\Sab) + H(\Sca|\Sab) - H(\Xa)$,
where (c) follows from the correctness condition \eqref{eq:ITcorrect} for $i=1$ which states that $H(\Xa|\Sab,\Sca)=0$.
Now \eqref{eq:firstLB2} follows from $H(\Sab) \geq H(\Sab|\Sbc)$.
\end{proof}
To prepare for lower bounding the conditional entropy terms in 
\eqref{eq:firstLB2}, we prove the following claim in Appendix~\ref{app:lowerbound} using a ``secure data processing inequality'' for residual information.
\begin{claim} \label{cl:useRI}
\begin{align}
    I(\Sab;\Sbc|\Sca) \geq RI(\Xa;\Xc), \label{eq:useRI1}\\
    I(\Sca;\Sbc|\Sab) \geq RI(\Xa;\Xb), \label{eq:useRI2}\\
    I(\Sab;\Sca|\Sbc) \geq RI(\Xb;\Xc). \label{eq:useRI3}
\end{align}
\end{claim}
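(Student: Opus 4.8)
The plan is to derive all three inequalities \eqref{eq:useRI1}--\eqref{eq:useRI3} from a single \emph{secure data processing inequality} for residual information, stated for generic variables: if $X=f(A,C)$ and $Y=g(B,C)$ for deterministic maps $f,g$, and if $I(C;Y\mid X)=0$ and $I(C;X\mid Y)=0$, then $I(A;B\mid C)\ge RI(X;Y)$. Each of \eqref{eq:useRI1}--\eqref{eq:useRI3} is an instance obtained by taking $C$ to be the share held in common by the two parties whose secrets appear on the right-hand side: for \eqref{eq:useRI1} put $(A,B,C)=(\Sab,\Sbc,\Sca)$ and $(X,Y)=(\Xa,\Xc)$, and cyclically for the others.

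The first ingredient is correctness \eqref{eq:ITcorrect}: the hypotheses $X=f(A,C)$ and $Y=g(B,C)$ say that, after conditioning on $C$, $X$ is a function of $A$ and $Y$ a function of $B$, so two applications of the data processing inequality give $I(A;B\mid C)\ge I(X;Y\mid C)$. For \eqref{eq:useRI1} this uses $H(\Xa\mid\Sab,\Sca)=0$ and $H(\Xc\mid\Sbc,\Sca)=0$. It therefore remains to prove $I(X;Y\mid C)\ge RI(X;Y)$, i.e.\ that conditioning on the common share $C$ cannot push the conditional mutual information below the residual information.

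This is the crux and the main obstacle. Let $Q^\ast$ be the maximal Gács--Körner common random variable of $X$ and $Y$ --- the connected-component label of the bipartite graph on $\mathrm{supp}(X)\sqcup\mathrm{supp}(Y)$ that joins $x$ to $y$ whenever $P_{XY}(x,y)>0$ --- which is the optimizer in \eqref{eq:RIdef}, so that $RI(X;Y)=I(X;Y\mid Q^\ast)$. I would first show that the two conditional independences force $C\perp(X,Y)\mid Q^\ast$: the condition $I(C;Y\mid X)=0$ makes $P(c\mid x,y)$ independent of $y$ along each edge, $I(C;X\mid Y)=0$ makes it independent of $x$, so $P(c\mid x,y)$ is constant along every edge and hence, by connectivity, constant on each component $Q^\ast$. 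Carefully justifying this propagation across a connected component is the only real work. Given $C\perp(X,Y)\mid Q^\ast$, and since $Q^\ast$ is a function of $Y$, the identity $I(X;Y\mid C)=I(X;Q^\ast\mid C)+I(X;Y\mid Q^\ast,C)$ (using $I(X;Q^\ast\mid Y,C)=0$) yields $I(X;Y\mid C)\ge I(X;Y\mid Q^\ast,C)=I(X;Y\mid Q^\ast)=RI(X;Y)$, the middle equality holding because $C$ is vacuous given $Q^\ast$.

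Finally I would verify that the two independence hypotheses are exactly what privacy \eqref{eq:ITprivacy} supplies. For \eqref{eq:useRI1}, privacy against $\pa$ gives $(\Sab,\Sca)\perp(\Xb,\Xc)\mid\Xa$, hence $I(\Sca;\Xc\mid\Xa)=0$, and privacy against $\pc$ gives $(\Sca,\Sbc)\perp(\Xa,\Xb)\mid\Xc$, hence $I(\Sca;\Xa\mid\Xc)=0$; these are precisely $I(C;Y\mid X)=0$ and $I(C;X\mid Y)=0$. The remaining inequalities \eqref{eq:useRI2} and \eqref{eq:useRI3} follow by the identical computation after permuting $(1,2,3)$, with the two required independences always furnished by privacy against the two parties holding the conditioning share.
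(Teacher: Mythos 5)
Your proof is correct, but it takes a genuinely different route from the paper's. The paper's argument first uses correctness \eqref{eq:ITcorrect} to write $I(\Sab;\Sbc|\Sca)=I(\Sab,\Sca,\Xa;\+\Sbc,\Sca,\Xc|\Sca)$, then lower-bounds this by $RI(\Sab,\Sca,\Xa;\+\Sbc,\Sca,\Xc)$ directly from the definition \eqref{eq:RIdef} (since $\Sca$ is a common function of the two augmented views, it is an admissible $Q$ in the minimization), and finally invokes, as a black box, the Wolf--Wullschleger secure data processing inequality (Lemma~\ref{lem:secDPI}), $RI((U,T);(V,Z))\geq RI(T;Z)$, with $T=\Xa$, $Z=\Xc$, $U=(\Sab,\Sca)$, $V=(\Sbc,\Sca)$, the hypotheses being the full view-level privacy conditions $I(\Sab,\Sca;\Xc|\Xa)=I(\Sbc,\Sca;\Xa|\Xc)=0$. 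You instead project down to the secrets immediately, $I(\Sab;\Sbc|\Sca)\geq I(\Xa;\Xc|\Sca)$ by correctness and two applications of the ordinary data processing inequality, and then prove from scratch that $I(X;Y|C)\geq RI(X;Y)$ whenever $I(C;Y|X)=I(C;X|Y)=0$: the two conditions force $P(c|x)=P(c|x,y)=P(c|y)$ along every edge of the characteristic bipartite graph, hence by path-propagation $P(c|x,y)$ is constant on each G\'acs--K\"orner component, giving $C\perp(X,Y)\mid Q^\ast$; your chain-rule step $I(X;Y|C)=I(X;Q^\ast|C)+I(X;Y|Q^\ast,C)\geq I(X;Y|Q^\ast,C)=I(X;Y|Q^\ast)=RI(X;Y)$ then checks out, since $Q^\ast$ is a function of $Y$ alone and is the optimizer in \eqref{eq:RIdef}. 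In effect you have re-derived, self-containedly, the content of Lemma~\ref{lem:secDPI} in the special case needed. What your route buys: it is elementary, it isolates exactly where privacy enters (constancy of the conditional share distribution across components), and it uses strictly weaker hypotheses --- only the marginalized conditions $I(\Sca;\Xc|\Xa)=0$ and $I(\Sca;\Xa|\Xc)=0$ concerning the \emph{common share}, rather than privacy of the full two-share views. What the paper's route buys: brevity, since the hard step is delegated to the cited lemma, and it remains at the level of the $RI$ monotone whose established properties (such as tensorization) the paper reuses elsewhere.
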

To lower bound $H(\Sca|\Sab)$ in \eqref{eq:firstLB2}, observe that
\begin{align}
H&(\Sca|\Sab)\notag\\
 &= H(\Sca|\Sbc,\Sab) + I(\Sca;\Sbc|\Sab)\notag\\
 &\stackrel{\text{(a)}}{=} H(\Xa,\Xc,\Sca|\Sbc,\Sab,\Xb) + I(\Sca;\Sbc|\Sab)\notag\\
 &\geq H(\Xa,\Xc|\Sbc,\Sab,\Xb) + I(\Sca;\Sbc|\Sab)\notag\\
 &\stackrel{\text{(b)}}{=} H(\Xa,\Xc|\Xb) + I(\Sca;\Sbc|\Sab),\notag\\
 &\stackrel{\text{(c)}}{=} H(\Xa,\Xc|\Xb) + RI(\Xa;\Xb),\label{eq:bound-cond-entropy2}
\end{align}
where (a) follows from \eqref{eq:ITcorrect}, (b) follows from \eqref{eq:ITprivacy}, and (c) from \eqref{eq:useRI2}.
Similarly,
\begin{align}
H(\Sab|\Sbc) &\geq H(\Xa,\Xb|\Xc) +  RI(\Xb;\Xc).\label{eq:bound-cond-entropy3}
\end{align}
Since the choice of $P_{\Secrets}\in\pins$ was arbitrary, applying \eqref{eq:bound-cond-entropy2} and \eqref{eq:bound-cond-entropy3} in \eqref{eq:firstLB2}, we may conclude that 
\begin{align*}
\rc &\geq \sup_{P_{\Secrets}\in\pins} \Big(H(\Xa,\Xb|\Xc) + RI(\Xb;\Xc)\\
  &\qquad\qquad\qquad +  H(\Xa,\Xc|\Xb) + RI(\Xa;\Xb) -  H(\Xa)\Big).
\end{align*}
%

Notice that the above lower bound on $\rc$ is potentially weaker than the desired \eqref{eq:LB2} in the sense that it lacks the inner supremizations. 
We now strengthen the lower bounds in \eqref{eq:bound-cond-entropy2}-\eqref{eq:bound-cond-entropy3} to prove the theorem. We illustrate this for $H(\Sca|\Sab)$. The key idea behind the strengthening is the following observation (proved in Appendix~\ref{app:lowerbound}) which follows from privacy against $\pa$ (i.e., \eqref{eq:privacy-Pi} with $i=1$). It says that the joint distribution of the shares $(\Sab,\Sca)$ seen by party $\pa$ remains unchanged even if the distribution $P_{\Secrets}$ is replaced by another distribution $P_{\Secrets'}$ which has the same marginal distribution for the secret meant for $\pa$, i.e., $P_{\Xa'}=P_{\Xa}$.
\begin{claim}\label{cl:switch-distr}
Let $\scheme=(P_R,\encoder)$ be a 3SS scheme with domain of secrets $\cS$ and let $P_{\Secrets}\in\pins$ which together induce the joint distribution \eqref{eq:joint-dist}.
Let $P_{\Secrets'}\in\pins^1(P_{\Xa})$, i.e., $P_{\Secrets'}$ is another distribution whose support is a subset of $\cS$ with $P_{\Xa'}=P_{\Xa}$. Let the shares generated by $\scheme$ under $P_{\Secrets'}$ be denoted by $\Shares'$; they have the distribution
\begin{align}
P_{\Secrets',\Shares'}(\secrets,\shares) &= P_{\Secrets'}(\secrets)P_R(\{r\in\cR:\encoder(\secrets,r)=\shares\}).
\label{eq:switched-distr}
\end{align}
Then $P_{\Sab,\Sca}=P_{\Sab',\Sca'}$.
\end{claim}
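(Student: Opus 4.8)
The plan is to extract from privacy against $\pa$ the fact that the scheme's conditional law of party~$\pa$'s shares depends on the secret only through $\xa$, and then simply to marginalize. Note first that the map $\shares\mapsto\ssa$ extracting $(\Sab,\Sca)$ is a fixed deterministic projection, so $P_{\Sa|\Secrets}(\ssa|\secrets)$ is a marginal of the scheme's kernel $P_{\Shares|\Secrets}$ of \eqref{eq:dealer-distr}; it is an attribute of $\scheme$ alone and does not depend on the input distribution. Privacy \eqref{eq:privacy-Pi} with $i=1$ says precisely that $P_{\Sa|\Secrets}(\ssa|\secrets)=P_{\Sa|\Secrets}(\ssa|\secrets')$ whenever $\xa=\xa'$. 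Hence there is a well-defined kernel $P_{\Sa|\Xa}$ with $P_{\Sa|\Secrets}(\ssa|\secrets)=P_{\Sa|\Xa}(\ssa|\xa)$ for every $\secrets=(\xa,\xb,\xc)\in\cS$.

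First I would write out the marginal of party~$\pa$'s shares under $P_{\Secrets}$ and substitute this observation:
\begin{align*}
P_{\Sab,\Sca}(\ssa)&=\sum_{\secrets\in\cS}P_{\Secrets}(\secrets)\,P_{\Sa|\Secrets}(\ssa|\secrets)\\
&=\sum_{\secrets\in\cS}P_{\Secrets}(\secrets)\,P_{\Sa|\Xa}(\ssa|\xa).
\end{align*}
Then I would regroup the sum by the value of $\xa$, pulling the kernel out of the inner sum; the inner sum $\sum_{(\xb,\xc):(\xa,\xb,\xc)\in\cS}P_{\Secrets}(\xa,\xb,\xc)$ is exactly $P_{\Xa}(\xa)$. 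This yields $P_{\Sab,\Sca}(\ssa)=\sum_{\xa}P_{\Xa}(\xa)\,P_{\Sa|\Xa}(\ssa|\xa)$, exhibiting the share marginal as a function of the input distribution only through $P_{\Xa}$, with the kernel $P_{\Sa|\Xa}$ fixed by $\scheme$.

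The identical computation applied to $P_{\Secrets'}$ and the shares $\Shares'$ of \eqref{eq:switched-distr} uses the same kernel $P_{\Sa|\Xa}$ — since $\Shares'$ is generated by the same $\encoder$ and $P_R$ — and gives $P_{\Sab',\Sca'}(\ssa)=\sum_{\xa}P_{\Xa'}(\xa)\,P_{\Sa|\Xa}(\ssa|\xa)$. Invoking the hypothesis $P_{\Xa'}=P_{\Xa}$ then makes the two expressions agree term by term, proving $P_{\Sab,\Sca}=P_{\Sab',\Sca'}$. I do not expect a genuine obstacle here: the only point requiring care is the very first step, namely recognizing that \eqref{eq:privacy-Pi} is a statement about the scheme's conditional kernel (so that it applies verbatim in both the $P_{\Secrets}$ and the $P_{\Secrets'}$ experiments), and not about any single fixed input distribution.
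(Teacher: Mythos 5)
Your proposal is correct and is essentially identical to the paper's own proof: the paper likewise uses privacy against $\pa$ to define a kernel $Q(\ssa|\xa)$ (your $P_{\Sa|\Xa}$), marginalizes the sum over $\cS$ by grouping on $\xa$ to get $\sum_{\xa} P_{\Xa}(\xa)\,Q(\ssa|\xa)$, and concludes via $P_{\Xa'}=P_{\Xa}$. Your only addition is making explicit the (correct) observation that the kernel is an attribute of $\scheme$ alone and hence shared by both experiments, which the paper leaves implicit.
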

Therefore, $H(\Sab|\Sca) = H(\Sca''|\Sab'')$ for any $P_{\Secrets''}\in\pins^1(P_{\Xa})$. Hence,
\begin{align*}
H(\Sca|\Sab) &= H(\Sca''|\Sab'')\\ &
  \geq H(\Xa'',\Xc''|\Xb'') + RI(\Xa'';\Xb''),
\end{align*}
where the inequality follows from \eqref{eq:bound-cond-entropy2}.
Thus, the second conditional entropy term in \eqref{eq:firstLB2} may be lower bounded as
\begin{align} &H(\Sca|\Sab) \notag\\
 &\qquad\geq\max_{P_{\Secrets''}\in\pins^1(P_{\Xa})} H(\Xa'',\Xc''|\Xb'') + RI(\Xa'';\Xb'').\label{eq:bound-cond-entropy2-strengthened} \end{align}
Similarly strengthening the lower bound on the first conditional entropy term in \eqref{eq:firstLB2} gives \eqref{eq:LB2}.
\end{proof}

\section{3SS Schemes for Secrets with Binary Alphabets}\label{sec:binary}
In this section, we characterize the randomness complexity of 3SS for the 255 distinct non-empty domains of secrets $\cS\subseteq\{0,1\}^3$.
As a first step, we drastically reduce the number of distinct sets for which randomness complexity needs to be characterized to 21.
For this, we observe that the randomness complexity of two sets $\cS,\cS'\subseteq\{0,1\}^3$ coincide if $\cS'$ can be obtained from $\cS$ using one of the following transformations:
In the sequel, for brevity, we denote $(x_1,x_2,x_3)\in\{0,1\}^3$ by simply $x_1x_2x_3$ whenever there is no room for confusion.
\begin{description}[style=unboxed,leftmargin=0cm]
    \item[Negation.] For some $i\in[3]$, $\cS'$ is obtained by negating coordinate $i$ of elements of $\cS$.
    e.g., for $i=1$, $x_1x_2x_3\in\cS'$ if and only if $\bar{x}_1x_2x_3\in\cS$, where $\bar{x}_1=1-x_1$.
    A scheme for $\cS$ can be used for $\cS'$ (and vice versa) if $\party{i}$ negates the output of the reconstruction function.
    \item[Swapping coordinates.] For distinct $i,j\in[3]$, $\cS'$ is obtained by exchanging coordinates $i$ and $j$ of elements in $\cS$.
    e.g., for $(i,j)=(1,2)$, $x_1x_2x_3\in\cS'$ if and only if $x_2x_1x_3\in\cS$.
    A scheme for $\cS$ can be used for $\cS'$ (and vice versa) by exchanging the reconstruction functions of parties $\party{i}$ and $\party{j}$. 
\end{description}

Table~\ref{tab:cubestable} in the Appendix 
classifies all non-empty $\cS\subseteq\{0,1\}^3$ into 21 families such that any set in a family can be transformed to another from the same the family using a sequence of negations and swapping of coordinates. Thus all members of a family have the same randomness complexity.

%
%
\newcommand{\Depth}{0.5}
\newcommand{\Height}{0.5}
\newcommand{\Width}{0.5}

\newcommand \drawcubea[1][1]%
{
    \coordinate (O) at (0,0,0);
    \coordinate (C) at (0,\Height*#1,0);
    \coordinate (G) at (0,\Height*#1,\Depth*#1);
    \coordinate (D) at (0,0,\Depth*#1);
    \coordinate (A) at (\Width*#1,0,0);
    \coordinate (B) at (\Width*#1,\Height*#1,0);
    \coordinate (F) at (\Width*#1,\Height*#1,\Depth*#1);
    \coordinate (E) at (\Width*#1,0,\Depth*#1);
    
    \draw (O) -- (A) -- (E) -- (D) -- cycle;
    \draw (C) -- (B) -- (F) -- (G) -- cycle;
    \draw (O) -- (C);
    \draw (A) -- (B);
    \draw (E) -- (F);
    \draw (D) -- (G);
}

\newcommand \drawballa[1]
{
    \shade[ ball color = red, opacity = 1.0] (#1) circle (0.1cm);
}

\newcommand{\drawFamilyOne}
{
    \begin{tabular}{c}
        \begin{tikzpicture}
            \drawcubea
            \shade[ball color=white, opacity=0.0] (C) circle(0.15cm);
            \foreach \ballcoord in {{O}}
            {
                \drawballa{\ballcoord}
            }
        \end{tikzpicture}
    \end{tabular}
}

\newcommand{\drawFamilyTwo}
{
    \begin{tabular}{c}
        \begin{tikzpicture}
            \drawcubea
            \shade[ball color=white, opacity=0.0] (C) circle(0.15cm);
            \foreach \ballcoord in {{O},{D}}
            {
                \drawballa{\ballcoord} 
            }
        \end{tikzpicture}
    \end{tabular}
}

\newcommand{\drawFamilyThree}
{
    \begin{tabular}{c}
        \begin{tikzpicture}
            \drawcubea
            \shade[ball color=white, opacity=0.0] (C) circle(0.15cm);
            \foreach \ballcoord in {{O},{G}}
            {
                \drawballa{\ballcoord}
            }
        \end{tikzpicture}
    \end{tabular}
}

\newcommand{\drawFamilyFour}
{
    \begin{tabular}{c}
        \begin{tikzpicture}
            \drawcubea
            \shade[ball color=white, opacity=0.0] (C) circle(0.15cm);
            \foreach \ballcoord in {{O},{F}}
            {
                \drawballa{\ballcoord}
                
            }
        \end{tikzpicture}
    \end{tabular}
}

\newcommand{\drawFamilyFive}
{
    \begin{tabular}{c}
        \begin{tikzpicture}
            \drawcubea
            \shade[ball color=white, opacity=0.0] (C) circle(0.15cm);
            \foreach \ballcoord in {{O},{A},{D}}
            {
                \drawballa{\ballcoord}
                
            }
        \end{tikzpicture}
    \end{tabular}
}

\newcommand{\drawFamilySix}
{
    \begin{tabular}{c}
        \begin{tikzpicture}
            \drawcubea
            \shade[ball color=white, opacity=0.0] (C) circle(0.15cm);
            \foreach \ballcoord in {{O},{D},{B}}
            {
                \drawballa{\ballcoord}
                
            }
        \end{tikzpicture}
    \end{tabular}
}

\newcommand{\drawFamilySeven}
{
    \begin{tabular}{c}
        \begin{tikzpicture}
            \drawcubea
            \shade[ball color=white, opacity=0.0] (C) circle(0.15cm);
            \foreach \ballcoord in {{O},{G},{E}}
            {
                \drawballa{\ballcoord}
                
            }
        \end{tikzpicture}
    \end{tabular}
}

\newcommand{\drawFamilyEight}
{
    \begin{tabular}{c}
        \begin{tikzpicture}
            \drawcubea
            \shade[ball color=white, opacity=0.0] (C) circle(0.15cm);
            \foreach \ballcoord in {{O},{A},{D},{E}}
            {
                \drawballa{\ballcoord}
                
            }
        \end{tikzpicture}
    \end{tabular}
}

\newcommand{\drawFamilyNine}
{
    \begin{tabular}{c}
        \begin{tikzpicture}
            \drawcubea
            \shade[ball color=white, opacity=0.0] (C) circle(0.15cm);
            \foreach \ballcoord in {{O},{D},{C},{A}}
            {
                \drawballa{\ballcoord}
                
            }
        \end{tikzpicture}
    \end{tabular}
}

\newcommand{\drawFamilyTen}
{
    \begin{tabular}{c}
        \begin{tikzpicture}
            \drawcubea
            \shade[ball color=white, opacity=0.0] (C) circle(0.15cm);
            \foreach \ballcoord in {{O},{D},{C},{E}}
            {
                \drawballa{\ballcoord}
            }
        \end{tikzpicture}
    \end{tabular}
}

\newcommand{\drawFamilyEleven}
{
    \begin{tabular}{c}
        \begin{tikzpicture}
            \drawcubea
            \shade[ball color=white, opacity=0.0] (C) circle(0.15cm);
            \foreach \ballcoord in {{O},{D},{C},{F}}
            {
                \drawballa{\ballcoord}
            }
        \end{tikzpicture}
    \end{tabular}
}

\newcommand{\drawFamilyTwelve}
{
    \begin{tabular}{c}
        \begin{tikzpicture}
            \drawcubea
            \shade[ball color=white, opacity=0.0] (C) circle(0.15cm);
            \foreach \ballcoord in {{O},{D},{B},{F}}
            {
                \drawballa{\ballcoord}
            }
        \end{tikzpicture}
    \end{tabular}
}

\newcommand{\drawFamilyThirteen}
{
    \begin{tabular}{c}
        \begin{tikzpicture}
            \drawcubea
            \shade[ball color=white, opacity=0.0] (C) circle(0.15cm);
            \foreach \ballcoord in {{O},{G},{E},{B}}
            {
                \drawballa{\ballcoord}
            }
        \end{tikzpicture}
    \end{tabular}
}

\newcommand{\drawFamilyFourteen}
{
    \begin{tabular}{c}
        \begin{tikzpicture}
            \drawcubea
            \shade[ball color=white, opacity=0.0] (C) circle(0.15cm);
            \foreach \ballcoord in {{O},{D},{C},{A},{B}}
            {
                \drawballa{\ballcoord}
            }
        \end{tikzpicture}
    \end{tabular}
}

\newcommand{\drawFamilyFifteen}
{
    \begin{tabular}{c}
        \begin{tikzpicture}
            \drawcubea
            \shade[ball color=white, opacity=0.0] (C) circle(0.15cm);
            \foreach \ballcoord in {{O},{D},{C},{A},{F}}
            {
                \drawballa{\ballcoord}
            }
        \end{tikzpicture}
    \end{tabular}
}

\newcommand{\drawFamilySixteen}
{
    \begin{tabular}{c}
        \begin{tikzpicture}
            \drawcubea
            \shade[ball color=white, opacity=0.0] (C) circle(0.15cm);
            \foreach \ballcoord in {{O},{D},{C},{E},{B}}
            {
                \drawballa{\ballcoord}
            }
        \end{tikzpicture}
    \end{tabular}
}

\newcommand{\drawFamilySeventeen}
{
    \begin{tabular}{c}
        \begin{tikzpicture}
            \drawcubea
            \shade[ball color=white, opacity=0.0] (C) circle(0.15cm);
            \foreach \ballcoord in {{O},{D},{C},{A},{B},{G}}
            {
                \drawballa{\ballcoord}
            }
        \end{tikzpicture}
    \end{tabular}
}

\newcommand{\drawFamilyEighteen}
{
    \begin{tabular}{c}
        \begin{tikzpicture}
            \drawcubea
            \shade[ball color=white, opacity=0.0] (C) circle(0.15cm);
            \foreach \ballcoord in {{O},{D},{C},{G},{E},{B}}
            {
                \drawballa{\ballcoord}
            }
        \end{tikzpicture}
    \end{tabular}
}

\newcommand{\drawFamilyNineteen}
{
    \begin{tabular}{c}
        \begin{tikzpicture}
            \drawcubea
            \shade[ball color=white, opacity=0.0] (C) circle(0.15cm);
            \foreach \ballcoord in {{O},{D},{C},{E},{B},{F}}
            {
                \drawballa{\ballcoord}
            }
        \end{tikzpicture}
    \end{tabular}
}

\newcommand{\drawFamilyTwenty}
{
    \begin{tabular}{c}
        \begin{tikzpicture}
            \drawcubea
            \shade[ball color=white, opacity=0.0] (C) circle(0.15cm);
            \foreach \ballcoord in {{O},{D},{C},{G},{A},{E},{B}}
            {
                \drawballa{\ballcoord}
            }
        \end{tikzpicture}
    \end{tabular}
}

\newcommand{\drawFamilyTwentyOne}
{
    \begin{tabular}{c}
        \begin{tikzpicture}
            \drawcubea
            \shade[ball color=white, opacity=0.0] (C) circle(0.15cm);
            \foreach \ballcoord in {{O},{D},{C},{G},{A},{E},{B},{F}}
            {
                \drawballa{\ballcoord}
            }
        \end{tikzpicture}
    \end{tabular}
}

\begin{table*}[]
    \begin{tabular}{|@{ }c@{ }|@{ }c@{ }|@{ }c@{ }|@{ }c@{ }|}
    \hline
    Family & \begin{tabular}[c]{p{4cm}c}
        Representative $\cS \subseteq \{0,1\}^3$\ shown as a subset of nodes in a cube that is labeled as here:& 
        \begin{tikzpicture}[baseline=10]
            \drawcubea[1.2]
            \node [above right,scale=0.6] at (O) {$000$};
            \node [above right,scale=0.6] at (A) {$100$};
            \node [right,scale=0.6] at (E) {$101$};
            \node [left,scale=0.6] at (D) {$001$};
            \node [left,scale=0.6] at (C) {$010$};
            \node [right,scale=0.6] at (B) {$110$};
            \node [right,scale=0.6] at (F) {$111$};
            \node [left,scale=0.6] at (G) {$011$};
        \end{tikzpicture}
    \end{tabular} & A randomness-optimal 3SS scheme for \cS   & $\rho$ \\ \hline \hline
    1        & $\drawFamilyOne$                                                      & $\Sab=\Sbc=\Sca=0$                                                                                &   \\ \cline{1-3}
    2        & $\drawFamilyThree$                                                    & $\Sab=\Sca=0$, and $\Sbc=\xb=\xc$                                                                 & 0 \\ \cline{1-3}
    3        & $\drawFamilyFour$                                                     & $\Sab=\Sbc=\Sca=\xa=\xb=\xc$                                                                      &   \\ \hline
    4 - 6    & $\drawFamilyTwo$ $\drawFamilySix$ $\drawFamilyTwelve$                 & $R \sim $ Bern(1/2), $\Sab = \Xa =\xb$, $\Sbc = R$, $\Sca = R \oplus \xc$                         & 1 \\ \hline
    7 - 8    & $\drawFamilySeven$ $\drawFamilyThirteen$                              & $R \sim $ Bern(1/2), $\Sab = R$, $\Sbc = R \oplus \xb$, $\Sca = R \oplus \xa$                     & 1 \\ \hline
    9 - 10   & $\drawFamilyFive$ $\drawFamilyEight$                                  & $R_1, R_3 \sim$ Bern(1/2), $\Sca=(R_1,R_3)$, $\Sab = \xa \oplus R_1$,  $\Sbc = \xc \oplus R_3$    & 2 \\ \hline
    11 - 13  & $\drawFamilyNine$ $\drawFamilyFifteen$ $\drawFamilyEleven$            & \begin{tabular}[c]{@{}c@{}}$(\alpha, \beta, \gamma)$ be a uniform permutation of $(0,1,2)$;\\
                                                                                               $(\Sab,\Sbc,\Sca)   = \begin{cases} 
                                                                                                                        (\alpha,\beta,\gamma) &\text{ if } \secrets = 000 \\
                                                                                                                        (\alpha,\alpha,\alpha) &\text{ if } \secrets = 111 \\
                                                                                                                        (\alpha,\beta,\alpha) &\text{ if } \secrets =  100\\
                                                                                                                        (\alpha,\alpha,\beta) &\text{ if } \secrets =  010\\
                                                                                                                        (\beta,\alpha,\alpha) &\text{ if } \secrets =  001 
                                                                                                                    \end{cases}$\end{tabular}                                             & $\log_2 6$ \\ \hline
    14 - 21  & \begin{tabular}[c]{@{}c@{}}$\drawFamilyTen$ $\drawFamilyFourteen$    $\drawFamilySeventeen$   $\drawFamilySixteen$ \\     
                                          $\drawFamilyEighteen$  $\drawFamilyNineteen$  $\drawFamilyTwenty$  $\drawFamilyTwentyOne$ \end{tabular} 
                                                                                     & \begin{tabular}[c]{@{}c@{}}$R_1,R_2,R_3 \sim$ Bern(1/2),\\
                                                                                        $\Sab = (\xa \oplus R_1 , R_2)$, $\Sbc = (\xa \oplus R_2, R_3)$, 
                                                                                        $\Sca = (\xa \oplus R_3, R_1)$ \end{tabular}                                                     & $3$ \\ \hline 
    \end{tabular}
    \caption{\sc 3SS schemes for binary secrets}
    \label{tab:my-table}
    \end{table*}

\subsection{Randomness Optimal 3SS Schemes}
\Cref{tab:my-table} shows 3SS schemes which achieve optimal randomness complexity for each of the 21 aforementioned families.
Several families in the table share the same optimal 3SS scheme, for example families 4-6, 7-8, etc.
This is often a consequence of the fact that an 3SS scheme for $\cS\subseteq\{0,1\}^3$ is also a 3SS scheme for any $\cS'\subseteq\cS$ since the correctness and privacy conditions for $\cS'$ are included in those for $\cS$.
We club such cases together to avoid repetition.
The schemes are easily verified to be perfectly correct (see below for Families 11-13), and hence, the reconstruction functions are implicit in the schemes.
We skip their description to avoid redundancy.

In most schemes, each secret $\secret{i}$ is shared by providing $R$ in $W_{ij}$ and $\secret{i}\oplus R$ in $W_{ki}$, where $R$ is a uniformly random bit independent of every other random variable.
This ensures that $\secret{i}$ can not be reconstructed only using $W_j$ (which does not include $W_{ki}$) or $W_k$ (which does not include $W_{ij}$); this ensures privacy against $\party{j}$ and $\party{k}$ as required in \eqref{eq:privacy-Pi}.

We now argue the privacy of two schemes that do not adhere to this pattern:

\begin{description}[leftmargin=0pt]
    \item[Families 7-8.] 
    The scheme is private against $\party{1}$ since their shares $(R,R\oplus x_1)$ reveals only $x_1$.
    It is also private against $\party{2}$ for a similar reason.
    Shares of $\party{3}$ is $(R\oplus x_1,R\oplus x_2)$, which reveals $x_1\oplus x_2$.
    But, since all secrets $(x_1,x_2,x_3)$ in both these families have $0$-parity, $x_1\oplus x_2=x_3$, ensuring privacy against $\party{3}$.

    \item[Families 11-13.] 
    In this scheme, if $x_i=1$, the pair of shares seen by $\party{i}$ have identical values and this is a (uniformly) random element from $\{0,1,2\}$. On the other hand, if $x_i=0$, the shares seen by $\party{i}$ have distinct values and these are uniformly random elements from $\{0,1,2\}$ subject to being distinct.
    This holds irrespective of the values of the other secrets.
    Thus, the scheme is perfectly private. The reconstruction function is also clear from the above -- The secret $x_i=1$ if the shares of $\party{i}$ are identical and $x_i=0$ otherwise.
\end{description}


\subsection{Matching Lower Bounds for Schemes in \Cref{tab:my-table}}
For all the families except~13 and~14 in \Cref{tab:my-table}, tight lower bounds are obtained using \eqref{eq:LB2} of \Cref{thm:RClowerbound} by appropriately choosing the distributions $P_{\Secrets},P_{\Secrets'}$ and $P_{\Secrets''}$.
We illustrate this using an example below. The remaining cases can be shown along similar lines (see Appendix~\ref{app:binaryITlowerbounds}).
\begin{example}\label{ex:it-lb}
Let $\cS = \{000,001,010,100\}$.
We use the bound \eqref{eq:LB2} from theorem \ref{thm:RClowerbound}.
For $\epsilon>0$, define $P_{\Secrets}$ as
\begin{align*}
    P_{\Secrets} (000) &= 1/3-\epsilon \quad P_{\Secrets} (001) = 1/3-\epsilon\\
    P_{\Secrets} (010) &= 1/3-\epsilon \quad P_{\Secrets} (100) = 3\epsilon
\end{align*}
Next, define $P_{\Secrets'}$ belonging to $\cP(P_{\Xb})$, i.e., $P_{\Secrets'}$ such that $P_{\Xb'}=P_{\Xb}$, as follows:
\begin{align*}
    P_{\Secrets'} (000) &= 1/3-\epsilon \quad &&P_{\Secrets'} (001) = 3\epsilon\\
    P_{\Secrets'}(010) &= 1/3-\epsilon \quad &&P_{\Secrets'}(100) = 1/3-\epsilon
\end{align*}
Define $P_{\Secrets'}$ and $P_{\Secrets''}$ belonging to family $\cP(P_{\Xa})$, as follows:
\begin{align*}
    P_{\Secrets''}(000) &= 1/2-2\epsilon \quad &&P_{\Secrets''}(001) = 1/2-2\epsilon\\
    P_{\Secrets''}(010) &= \epsilon \quad &&P_{\Secrets''}(100) = 3\epsilon
\end{align*}
With respect to $P_{\Secrets'}$, the ``characteristic graph'' of random variables ($\Xb',\Xc')$ is connected when $\epsilon>0$, where
characteristic graph is a bipartite graph over the supports of $\Xb'$ and $\Xc'$ such that edge $(x_2,x_3)$ exists if and only if $P_{\Xb'\Xc'}(x_2,x_3)>0$.
Hence, G\'acs-K\"orner common information $CI_{GK}(\Xb';\Xc')=0$ (see~\cite{GaKo73}), and, therefore, $RI(\Xb';\Xc')=I(\Xb';\Xc')$.
Furthermore, $RI(\Xa'';\Xb'')=I(\Xa'';\Xb'')$ with respect to $P_{\Secrets''}$ as well since the characteristic graph of ($\Xa'',\Xb'')$ is also connected.
Then,
\begin{align*}
    \rc
    &= H(\Xa',\Xb'|\Xc')+RI(\Xc';\Xb')\\
    &\hspace{2cm} + H(\Xa'',\Xc''|\Xb'')+RI(\Xa'';\Xb'')-H(\Xa)\\
    &=H(\Xa',\Xb'|\Xc')+H(\Xb')-H(\Xb'|\Xc')\\
    & + H(\Xa'',\Xc''|\Xb'')+H(\Xa'')-H(\Xa''|\Xb'') - H(\Xa)\\
    &=H(\Xb')+H(\Xa'|\Xb',\Xc')+H(\Xc''|\Xa'',\Xb'')
\end{align*}
Plugging in these values according to $P_{\Secrets'}$ and $P_{\Secrets''}$, and denoting the entropy of Bernoulli$(p)$ by $h(p)$,
\begin{align*}
    \rc
    &= h\left(1/3-\epsilon\right) + (2/3-2\epsilon)h(1/2) + (1-4\epsilon) h\left(1/2\right).
\end{align*}
As $\epsilon \rightarrow 0, \rc \rightarrow \log_2 6$.
Note that in the above argument, we made use of the fact that for $\epsilon>0$, the G\'acs-K\"orner information involved are 0 and then obtained the requisite bound in the limit as $\epsilon\rightarrow 0$. Using the limiting distributions directly in \eqref{eq:LB2} does not lead to a tight bound.  
\end{example}


For families 13 and 14, where the information-theoretic lower bounding technique in \Cref{thm:RClowerbound} falls short of being tight,
we give a combinatorial proof of the optimality of the schemes in \Cref{tab:my-table}.
Below, we sketch the combinatorial lower bound for family 13.
A proof for family 14 following the same approach is provided in Appendix~\ref{app:family14}.

\begin{example}\label{ex:family13}
    For support $\cS = \{000,001,010,111\}$, a lower bound of $\rc \geq \log 6$ can be proved using a combinatorial argument as follows.
    Let $M_{\secrets} = \{\shares \text{ s.t. } P_{\Shares|\Secrets}(\shares|\secrets)>0\}$.
    We prove $\rc \geq \log 6$ by showing that $|M_{\secrets}| \geq 6$ for some $\secrets \in \cS$.
    Let $\aone \in \salpab, \bone \in \salpbc, \cone \in \salpca$ such that $a_1b_1c_1\in M_{000}$.
    By privacy against $\pa$ -- \eqref{eq:privacy-Pi} for $i=3$,
    $$P_{\Sa|\Secrets}(\aone\cone|010) = P_{\Sa|\Secrets}(\aone\cone|000) > 0,$$
    where the inequality follows from $a_1b_1c_1\in M_{000}$.
    Hence, there exists $b_2$ such that $a_1b_2c_1\in M_{001}$.
    We can show that $b_2\neq b_1$ as follows:
    $\decoderc(b_1c_1)=0$ ensures correctness of $\party{3}$ since $a_1b_1c_1\in M_{000}$.
    But, since $a_1b_2c_1\in M_{001}$, by correctness of $\party{3}$, $\decoderb(b_2c_1)=1$.
    Hence, $b_1\neq b_2$.

    We proceed using similar arguments to add more entries to $M_{\secrets},\secrets\in\cS$ and to establish the properties of these entries.
    \begin{enumerate}[leftmargin=0.5cm] 
        \item By privacy against $\pa$, $a_1b_1c_1\in M_{000}\Rightarrow a_1b_3c_1\in M_{010}$ for some $b_3$.
        Since $a_1b_1c_1\in M_{000}$, $\decoderb(a_1b_1)=0$. But $\decoderb(a_1b_3)=1$ since $a_1b_3c_1\in M_{010}$. Hence, $b_3\neq b_1$.
        Similarly, since $a_1b_2c_1\in M_{001}$, $\decoderc(b_2c_1)=1$ but $\decoderc(b_3c_1)=0$; hence, $b_3\neq b_2$.
        Thus, $b_1\neq b_3\neq b_2$.
        \item By privacy against $\pc$, $a_1b_1c_1\in M_{000}\Rightarrow\atwo\bone\cone\in M_{010}$ for some $a_2$.
        Since $a_1b_1c_1\in M_{000}$, $\decoderb(\aone\bone)=0$ but $\decoderb(a_2b_1)=1$; hence $a_2\neq a_1$.
        \item By privacy against $\party{3}$, $a_1b_2c_1\in M_{001}\Rightarrow\athree\btwo\cone\in M_{111}$ for some $\athree$.
        Since $\decoderb(\aone\btwo)=0$ ($\because a_1b_2c_1\in M_{001}$) but $\decoderb(\athree\btwo)=1$, $\athree \neq \aone$; since $\decodera(\atwo\cone)=0$ ($\because \atwo\bone\cone\in M_{010}$), $\athree \neq \atwo$.
        \item By privacy against $\pb$ and 
        $a_3b_2c_1\in M_{111}\Rightarrow\athree\btwo\ctwo\in M_{010}$ for some $\ctwo$.
        Since $\decoderc(\btwo\cone)=1$ ($\because a_1b_2c_1\in M_{001}$) and $\decoderc(\btwo\ctwo)=0$, $\ctwo \neq \cone$.
        \item By privacy against $\pa$, $a_2b_1c_1\in M_{010}\Rightarrow a_2b_4c_1\in M_{001}$ and $a_3b_2c_2\in M_{010}\Rightarrow a_3b_{5}c_2\in M_{001}$ for some $b_4,b_5$.
        \item By privacy against $\pa$, $a_2b_1c_1\in M_{010}\Rightarrow a_2b'_4c_1\in M_{000}$ and $a_3b_2c_2\in M_{010}\Rightarrow a_3b'_5c_2\in M_{000}$ for some $b'_4,b'_5$.
        Since $\decoderc(\bfour\cone)=1$ $(\because a_2b_4c_1\in M_{001})$, $b'_4\neq b_4$.
        Since $\decoderc(\bfive\ctwo)=1$ $(\because a_3b_5c_2\in M_{001})$, $b'_5\neq b_5$.
        \item By privacy against $\pb$, $a_2b'_4c_1\in M_{000}\Rightarrow a_2b'_4c_3\in M_{001}$ and $a_3b'_5c_2\in M_{000}\Rightarrow a_3b'_5c_4\in M_{001}$ for some $c_3,c_4$.
        \item By privacy against $\pb$, $a_1b_1c_1\in M_{000}\Rightarrow a_1b_1c_6\in M_{001}$. Since $\decoderc(b_1c_1)=0 (\because a_1b_1c_1 \in M_{000})$ but $\decoder(b_1c_6)=1$, $b_1 \neq b_6$. 
    \end{enumerate}
    Thus, we have distinct entries
    \begin{align*}
        \aone\btwo\cone, \aone\bone\csix, \atwo\bfour\cone, \athree\bfive\ctwo, \atwo\bfour'\cthree, \athree\bfive'\cfour \in M_{001}
    \end{align*}
    Hence, $|M_{001}| \geq 6$. Therefore, $\rho\geq\log6$ which matches that of the scheme in Table~\ref{tab:my-table}. 
\end{example}

\section{Discussion}\label{sec:discussion}
In all but two families in Section~\ref{sec:binary}, the information-theoretic lower bound in Theorem~\ref{thm:RClowerbound} is tight. Thus, by Remark~\ref{rem:tensorize}, for these, we also solve the randomness complexity of 3SS for domain of secrets $\cS^n$ for all $n$ (something a combinatorial argument of the kind we used to solve the remaining two families cannot do). 
The remaining two families point to an interesting open question which is an analog of the ``direct-sum'' problem in communication complexity~\cite{JComp:FeKuNaNi95,IT:BrRa14}.
From Remark~\ref{rem:tensorize} it is clear that when the bound in Theorem~\ref{thm:RClowerbound} is tight for the one-shot problem, there is no reduction in $\frac{1}{n}\rc(\cS^n)$ possible by working over a block of length $n>1$. Thus, no such reduction is possible for 3SS with binary secrets except potentially for these two families. 
One possibility is that there are schemes which can take advantage of the block length to reduce the amortized randomness complexity in one or both of these cases (or some other example with non-binary secrets)\footnote{While we do not pursue it here, the lower bounds in Theorem~\ref{thm:RClowerbound} also hold for a slightly weaker form of security where privacy and correctness are only asymptotically achieved (i.e., the quantities in \eqref{eq:ITcorrect} and \eqref{eq:ITprivacy} only approach 0 in the limit $n\rightarrow \infty$ for all distributions $P_{\Secrets}\in\pins$ such that the $n$-fold secrets are distributed as $n$ i.i.d. copies of $P_{\Secrets}$, i.e., $P_{\Secrets^n}=P_{\Secrets}\times\ldots\times P_{\Secrets}$). The schemes under the possibility discussed may only achieve this weaker form of security.}. The other possibility is that no such advantage exists and the weakness is in the bound of Theorem~\ref{thm:RClowerbound} on the limiting value of $\frac{1}{n}\rc(\cS^n)$.

\balance


\begin{thebibliography}{10}

\bibitem{CACM:McESar81}
R.~J. McEliece and D.~V. Sarwate, ``On sharing secrets and reed-solomon
  codes,'' {\em Communications of the ACM}, vol.~24, no.~9, pp.~583--584, 1981.

\bibitem{IT:KaGrHe83}
E.~Karnin, J.~Greene, and M.~Hellman, ``On secret sharing systems,'' {\em IEEE
  Transactions on Information Theory}, vol.~29, no.~1, pp.~35--41, 1983.

\bibitem{C:BlaMea84}
G.~R. Blakley and C.~Meadows, ``Security of ramp schemes,'' in {\em CRYPTO'84}
  (G.~R. Blakley and D.~Chaum, eds.), vol.~196 of {\em {LNCS}}, pp.~242--268,
  Springer, Heidelberg, Aug. 1984.

\bibitem{STACS:BluDeSVacc93}
C.~Blundo, A.~De~Santis, and U.~Vaccaro, ``Efficient sharing of many secrets,''
  in {\em STACS 93: 10th Annual Symposium on Theoretical Aspects of Computer
  Science}, pp.~692--703, Springer, 1993.

\bibitem{C:JacMarOKe93}
W.-A. Jackson, K.~M. Martin, and C.~M. O'Keefe, ``Multisecret threshold
  schemes,'' in {\em CRYPTO'93} (D.~R. Stinson, ed.), vol.~773 of {\em {LNCS}},
  pp.~126--135, Springer, Heidelberg, Aug. 1994.

\bibitem{C:BDDGV94}
C.~Blundo, A.~{De Santis}, G.~{Di Crescenzo}, A.~G. Gaggia, and U.~Vaccaro,
  ``Multi-secret sharing schemes,'' in {\em CRYPTO'94} (Y.~Desmedt, ed.),
  vol.~839 of {\em {LNCS}}, pp.~150--163, Springer, Heidelberg, Aug. 1994.

\bibitem{AC:JacMarOKe94}
W.-A. Jackson, K.~M. Martin, and C.~M. {O'Keefe}, ``On sharing many secrets
  (extended abstract),'' in {\em ASIACRYPT'94} (J.~Pieprzyk and
  R.~Safavi-Naini, eds.), vol.~917 of {\em {LNCS}}, pp.~42--54, Springer,
  Heidelberg, Nov.~/~Dec. 1995.

\bibitem{JacMarOKe96}
W.-A. Jackson, K.~M. Martin, and C.~M. O'Keefe, ``Ideal secret sharing schemes
  with multiple secrets,'' {\em J. Cryptology}, vol.~9, pp.~233--250, 1996.

\bibitem{Mas06}
B.~Masucci, ``Sharing multiple secrets: Models, schemes and analysis,'' {\em
  Des. Codes Cryptogr.}, vol.~39, no.~1, pp.~89--111, 2006.

\bibitem{IT:DaPrPr16}
D.~Data, V.~M. Prabhakaran, and M.~M. Prabhakaran, ``Communication and
  randomness lower bounds for secure computation,'' {\em IEEE Transactions on
  Information Theory}, vol.~62, no.~7, pp.~3901--3929, 2016.

\bibitem{HaNaPrPr23}
H.~K.~P. Anilkumar, V.~Narayanan, M.~M. Prabhakaran, and V.~M. Prabhakaran,
  ``Private computations of {AND} -- new upper bounds and lower bounds on
  randomness complexity,'' {\em in preparation}, 2023.

\bibitem{DM:KuOsEmRoThVe21}
E.~Kushilevitz, R.~Ostrovsky, E.~Prouff, A.~Ros\'{e}n, A.~Thillard, and
  D.~Vergnaud, ``Lower and upper bounds on the randomness complexity of private
  computations of {AND},'' {\em SIAM Journal on Discrete Mathematics}, vol.~35,
  no.~1, pp.~465--484, 2021.

\bibitem{IT:WoWu08}
S.~Wolf and J.~Wullschleger, ``New monotones and lower bounds in unconditional
  two-party computation,'' {\em IEEE Transactions on Information Theory},
  vol.~54, no.~6, pp.~2792--2797, 2008.

\bibitem{IT:PrPr14}
V.~M. Prabhakaran and M.~M. Prabhakaran, ``Assisted common information with an
  application to secure two-party sampling,'' {\em IEEE Transactions on
  Information Theory}, vol.~60, no.~6, pp.~3413--3434, 2014.

\bibitem{GaKo73}
P.~G{\'a}cs and J.~K{\"o}rner, ``Common information is far less than mutual
  information,'' {\em Problems of Control and Information Theory}, vol.~2,
  no.~2, pp.~149--162, 1973.

\bibitem{JComp:FeKuNaNi95}
T.~Feder, E.~Kushilevitz, M.~Naor, and N.~Nisan, ``Amortized communication
  complexity,'' {\em SIAM Journal on Computing}, vol.~24, no.~4, pp.~736--750,
  1995.

\bibitem{IT:BrRa14}
M.~Braverman and A.~Rao, ``Information equals amortized communication,'' {\em
  IEEE Transactions on Information Theory}, vol.~60, no.~10, pp.~6058--6069,
  2014.

\end{thebibliography}

\balance
\begin{table*}[]
\centering
    \begin{tabular}{|@{ }c@{ }|@{ }c@{ }|@{ }c@{ }|@{ }c@{ }|}
    \hline
    Family & \begin{tabular}[c]{p{4cm}c}
        Representative $\cS \subseteq \{0,1\}^3$\ shown as a subset of nodes in a cube that is labeled as here:& 
        \begin{tikzpicture}[baseline=10]
            \drawcubea[1.2]
            \node [above right,scale=0.6] at (O) {$000$};
            \node [above right,scale=0.6] at (A) {$100$};
            \node [right,scale=0.6] at (E) {$101$};
            \node [left,scale=0.6] at (D) {$001$};
            \node [left,scale=0.6] at (C) {$010$};
            \node [right,scale=0.6] at (B) {$110$};
            \node [right,scale=0.6] at (F) {$111$};
            \node [left,scale=0.6] at (G) {$011$};
        \end{tikzpicture}
    \end{tabular} & \begin{tabular}[c]{c}Optimizing distributions in \eqref{eq:LB2}\\ (the limiting values are shown)\end{tabular}   & $\rho$ \\ \hline \hline
    4 - 6        & $\drawFamilyTwo$ $\drawFamilySix$ $\drawFamilyTwelve$                                                    &      $P_{\Secrets'}=P_{\Secrets''}=P_{\Secrets}: P_{\Xc}(0)=P_{\Xc}(1)=1/2$     & 1 \\ \cline{1-4}
    7 - 8        & $\drawFamilySeven$ $\drawFamilyThirteen$                                                     &      $P_{\Secrets'}=P_{\Secrets''}=P_{\Secrets}: P_{\Xb}(0) = P_{\Xb}(1) =1/2$  &  1 \\ \hline
    9 - 10   & $\drawFamilyFive$ $\drawFamilyEight$                                  &   \begin{tabular}[c]{@{}c@{}}$P_{\Secrets}: \Xa=\Xb=0, P_{\Xc}(0)=P_{\Xc}(1)=1/2 $\\ $P_{\Secrets'}: \Xc=\Xb=0, P_{\Xa}(0)=P_{\Xa}(1)=1/2$\\ $P_{\Secrets''}: \Xa=\Xb=0, P_{\Xc}(0)=P_{\Xc}(1)=1/2$\end{tabular}  & 2 \\ \hline
    11 - 12 & $\drawFamilyNine$ $\drawFamilyFifteen$ & \begin{tabular}[c]{@{}c@{}}$P_{\Secrets}(000)= P_{\Secrets}(001)= P_{\Secrets}(010)=1/3$\\ $P_{\Secrets'}(000)= P_{\Secrets'}(010)= P_{\Secrets'}(100)=1/3$\\ $P_{\Secrets''}(000)=P_{\Secrets''}(001)=1/2$\end{tabular} & $\log 6$ \\ \hline
    13 & $\drawFamilyEleven$ & combinatorial bound (Example~\ref{ex:family13}) & $\log 6$ \\ \hline
    14 & $\drawFamilyTen$ & combinatorial bound (Appendix~\ref{app:family14}) & 3 \\ \hline
    15 - 16 & $\drawFamilyFourteen$    $\drawFamilySeventeen$ & \begin{tabular}[c]{@{}c@{}}$ P_{\Secrets}(000)=P_{\Secrets}(010)=1/2$\\ $ P_{\Secrets'}(000)=P_{\Secrets'}(100)=P_{\Secrets'}(010)=P_{\Secrets'}(110)=1/4$\\ $P_{\Secrets''}(000)=P_{\Secrets''}(001)=1/2$\end{tabular} & 3 \\ \hline
    17 - 21 & \begin{tabular}[c]{@{}c@{}}  $\drawFamilySixteen$ $\drawFamilyEighteen$  $\drawFamilyNineteen$     
          $\drawFamilyTwenty$  $\drawFamilyTwentyOne$ \end{tabular}  &  \begin{tabular}[c]{@{}c@{}}$P_{\Secrets}(000)=1/2, P_{\Secrets}(010)=1/2$\\ $P_{\Secrets'}(001)=P_{\Secrets'}(010)=P_{\Secrets'}(101)=P_{\Secrets'}(110)=1/4$\\ $P_{\Secrets''}(000)=1/2, P_{\Secrets''}(001)=1/2$\end{tabular} & $3$ \\ \hline 
    \end{tabular}
    \caption{\sc Lower bounds from Theorem~\ref{thm:RClowerbound}}
    \label{tab:lb}
    \end{table*}

\appendices
\section{Details Omitted from the Proof of Theorem~\ref{thm:RClowerbound}}\label{app:lowerbound}

\begin{proof}[Proof of Claim~\ref{cl:useRI}]
Residual information (Definition~\ref{def:RI}) has the following property:
\begin{lemma}[Secure Data Processing Inequality~{\cite[Lemma~6]{IT:WoWu08}}]\label{lem:secDPI}
If $T,U,V,Z$ are jointly distributed random variables such that the following two conditions hold: $I(U;Z|T)=0$ and $I(T;V|Z)=0$, then 
\begin{align}
 RI((U,T);(V,Z)) &\geq RI(T;Z). \label{eq:RI-DPineq}
\end{align}
\end{lemma}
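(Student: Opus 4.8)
The plan is to prove the inequality through the variational Gács--Körner characterisation of residual information recalled after Definition~\ref{def:RI}: for any jointly distributed $X,Y$ one has $RI(X;Y)=I(X;Y)-CI_{GK}(X;Y)$, and an optimal $Q$ in \eqref{eq:RIdef} is the maximal common random variable $K$, which is simultaneously a deterministic function of $X$ and of $Y$. First I would fix a $Q$ that is optimal for the pair $\big((U,T),(V,Z)\big)$, so that $H(Q|U,T)=H(Q|V,Z)=0$ and $RI\big((U,T);(V,Z)\big)=I\big((U,T);(V,Z)\mid Q\big)$. Two chain-rule expansions let me discard the ``appended'' variables: $I\big((U,T);(V,Z)\mid Q\big)\ge I\big(T;(V,Z)\mid Q\big)\ge I(T;Z\mid Q)$, since each dropped term is a nonnegative conditional mutual information. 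Hence it suffices to show $I(T;Z\mid Q)\ge RI(T;Z)$.

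Next I would convert the two hypotheses into a ``double-Markov'' condition on $Q$ relative to $(T,Z)$. Because $Q$ is a function of $(U,T)$, conditioned on $T$ it is a function of $U$, so data processing together with $I(U;Z\mid T)=0$ gives $I(Q;Z\mid T)\le I(U;Z\mid T)=0$; symmetrically, $Q$ being a function of $(V,Z)$ and $I(T;V\mid Z)=0$ yield $I(Q;T\mid Z)=0$. Using $I(T;Q\mid Z)=0$, the identity $I(T;Z,Q)=I(T;Z)+I(T;Q\mid Z)=I(T;Q)+I(T;Z\mid Q)$ collapses to $I(T;Z\mid Q)=I(T;Z)-I(T;Q)$. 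Thus the entire claim reduces to the single inequality $I(T;Q)\le CI_{GK}(T;Z)$.

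The heart of the argument, and the step I expect to be the main obstacle, is this last inequality, which is precisely the Double-Markov Lemma for $(T,Z)$: any $Q$ with $I(Q;Z\mid T)=I(Q;T\mid Z)=0$ must factor through the Gács--Körner maximal common random variable $K$, i.e.\ $Q-K-(T,Z)$. I would establish it by the characteristic-graph argument already used in Example~\ref{ex:it-lb}. The two conditions force the conditional law of $Q$ given $\{T=t,Z=z\}$ to depend on $t$ alone and on $z$ alone; matching these on every edge $(t,z)$ with $P(t,z)>0$ shows the law is constant on each connected component of the bipartite characteristic graph of $(T,Z)$. Since $K$ is exactly the component label, $Q$ is conditionally independent of $(T,Z)$ given $K$, so $I(T;Q\mid K)=0$ and therefore $I(T;Q)\le I(T;K,Q)=I(T;K)=H(K)=CI_{GK}(T;Z)$, the middle equality using that $K$ is a function of $T$.

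Combining the pieces yields the lemma: $RI\big((U,T);(V,Z)\big)=I\big((U,T);(V,Z)\mid Q\big)\ge I(T;Z\mid Q)=I(T;Z)-I(T;Q)\ge I(T;Z)-CI_{GK}(T;Z)=RI(T;Z)$. The only nontrivial ingredient is the Double-Markov step; if one prefers, it can be invoked as a standard property of Gács--Körner common information rather than reproved via the connectivity argument, in which case the remainder of the proof is a short sequence of chain-rule manipulations.
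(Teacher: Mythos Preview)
The paper does not give its own proof of this lemma: it is simply quoted (with citation to \cite[Lemma~6]{IT:WoWu08}) inside the proof of Claim~\ref{cl:useRI} and then applied as a black box. So there is no ``paper's proof'' to compare against.

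That said, your argument is correct and is essentially the standard way to establish the inequality. The reduction $I\big((U,T);(V,Z)\mid Q\big)\ge I(T;Z\mid Q)$ by two chain-rule drops is clean; the conversion of the hypotheses $I(U;Z\mid T)=I(T;V\mid Z)=0$ into $I(Q;Z\mid T)=I(Q;T\mid Z)=0$ via the fact that $Q$ is determined by each of $(U,T)$ and $(V,Z)$ is exactly right; and the identification $I(T;Z\mid Q)=I(T;Z)-I(T;Q)$ followed by the Double-Markov step (forcing $Q$ to factor through the G\'acs--K\"orner maximal common variable $K$, hence $I(T;Q)\le H(K)=CI_{GK}(T;Z)$) completes the proof. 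Your connected-components justification for the Double-Markov step is also sound and matches the characteristic-graph reasoning the paper already invokes elsewhere (Example~\ref{ex:it-lb}). In short: the paper punts to the reference, and what you wrote is a correct self-contained proof.
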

\noindent Here, $(U,T)$ and $(V,Z)$ should be thought of as the ``views'' of two parties and $T$ and $Z$ their respective secrects. The condition $I(U;Z|T)=I(T;V|Z)=0$ are privacy guarantees; e.g., $I(U;Z|T)=0$ means the ``view'' $(U,T)$ of a party contains no additional information about the other party's secret $Z$ than can be inferred from party's own secret $T$. The lemma states that if such privacy guarantees hold, the residual information of the views cannot be any smaller than the residual information of the secrets.

\begin{align*}
I(\Sab;\Sbc|\Sca) &= I(\Sab,\Sca;\Sbc,\Sca|\Sca)\\
        &\stackrel{(a)}{=} I(\Sab,\Sca,\Xa; \Sbc,\Sca,\Xc|\Sca)\\
        &\stackrel{(b)}{\geq} RI(\Sab,\Sca,\Xa;\Sbc,\Sca,\Xc)\\
        &\stackrel{(c)}{\geq} RI(\Xa;\Xc),
\end{align*}
where (a) follows from correctness conditions \eqref{eq:ITcorrect}, (b) from the definition of residual information \eqref{eq:RIdef}, and (c)  from Lemma~\ref{lem:secDPI} since the privacy conditions \eqref{eq:ITprivacy} imply that $I(\Sbc,\Sca;\Xa|\Xc)=I(\Sab,\Sca;\Xc|\Xa)=0$.
The other two inequalities \eqref{eq:useRI2} and \eqref{eq:useRI3} follow similarly.
\end{proof}

\begin{proof}[Proof of Claim~\ref{cl:switch-distr}]
As defined earlier, denote $\Sa=(\Sab,\Sca)$ and $\Sa'=(\Sab',\Sca')$. We need to show that $P_{\Sa}=P_{\Sa'}$. We will see that this is a direct consequence of privacy against $\pa$, i.e., \eqref{eq:privacy-Pi} with $i=1$. This privacy condition implies that
$P_{\Sa|\Secrets} (\ssa|\xa,\xb,\xc)=P_{\Sa'|\Secrets'}(\ssa|\xa,\xb,\xc)$ is only a function of $\ssa,\xa$. Let this function be $Q(\ssa|\xa)$. Then, using this we may write, for $\ssa\in\salpa$,
\begin{align*} 
P_{\Sa}(\ssa) &= \sum_{(\xa,\xb,\xc)\in\cS} P_{\Secrets}(\xa,\xb,\xc) P_{\Sa|\Secrets} (\ssa|\xa,\xb,\xc)\\
 &= \sum_{(\xa,\xb,\xc)\in\cS} P_{\Secrets}(\xa,\xb,\xc) Q(\ssa|\xa)\\
 &= \sum_{\xa} P_{\Xa}(\xa) Q(\ssa|\xa)\\
 &\stackrel{\text{(a)}}{=} \sum_{\xa} P_{\Xa'}(\xa) Q(\ssa|\xa)\\
 &= \sum_{(\xa,\xb,\xc)\in\cS} P_{\Secrets'}(\xa,\xb,\xc) Q(\ssa|\xa)\\
 &= \sum_{(\xa,\xb,\xc)\in\cS} P_{\Secrets'}(\xa,\xb,\xc) P_{\Sa'|\Secrets'} (\ssa|\xa,\xb,\xc)\\
 &= P_{\Sa'}(\ssa),
\end{align*}
where (a) follows from $P_{\Xa'}=P_{\Xa}$.
\end{proof}

\begin{proof}[Proof of \eqref{eq:LB1}]
Along the lines of the proof of Claim~\ref{cl:firstLB1} (where in the last step $H(\Sab)$ is lower bounded by $H(\Sab|\Sca)$),
\begin{align} \log|\cR| \geq H(\Sab|\Sca) + H(\Sca|\Sab) - H(\Xa).\label{eq:firstLB1}\end{align}
Along the lines of \eqref{eq:bound-cond-entropy3}, we may argue that
\begin{align}
H(\Sab|\Sca) &\geq H(\Xa,\Xb|\Xc) + RI(\Xa;\Xc).\label{eq:bound-cond-entropy1}
\end{align}
This can be further strengthened using Claim~\ref{cl:switch-distr} to
\[ H(\Sab|\Sca) \geq \max_{P_{\Secrets'}\in\pins^1(P_{\Xa})} H(\Xa',\Xb'|\Xc') + RI(\Xa';\Xc').\]
Now \eqref{eq:LB1} follows from applying to \eqref{eq:firstLB1}  this and the strengthened lower bound on $H(\Sca|\Sab)$ we obtained in the proof of~\eqref{eq:LB2}.
\end{proof}

\section{Tight Lower bounds for Binary 3SS}\label{app:binaryITlowerbounds}
The information-theoretic lower bound \eqref{eq:LB2} on randomness complexity of Theorem~\ref{thm:RClowerbound} turns out to be tight for all families of problems from Section~\ref{sec:binary} except two (families 13, 14). This can be shown along the lines of the discussion in Example~\ref{ex:it-lb}. Table~\ref{tab:lb} shows the optimizing distributions $P_{\Secrets},P_{\Secrets'},P_{\Secrets''}$ of \eqref{eq:LB2} which lead to the tight bounds. Note that, as in Example~\ref{ex:it-lb}, the distributions are taken to have full support over $\cS$ and the optimal bound is obtained in the limit as the distributions approach those shown in the table. Note that a lower bound shown for $\cS$ serves as a lower bound for any $\cS'$ such that $\cS'\supset \cS$. In the table we have grouped together such cases and it suffices to show the lower bound for the one whose domain of secrets is contained in all the others.

A combinatorial argument for family 13 was already shown in Example~\ref{ex:family13}; that for family~14 is given in Appendix~\ref{app:family14}.

    \begin{table}
        \centering
        \begin{tabular}{|@{ }c@{ }|@{ }c@{ }|@{ }c@{ }|@{ }c@{ }|}
            \hline
            $M_{000}$ & $M_{010}$ & $M_{100}$ & $M_{101}$\\ \hline
            $a_1b_1c_1$ & \begin{tabular}{@{ } c@{ }} $a_2b_1c_1$\\ $a_2 \neq a_1$ \end{tabular} & \begin{tabular}{@{ } c@{ }} $a_1b_1c_2$\\ $c_2 \neq c_1$ \end{tabular} & \begin{tabular}{@{ } c@{ }}$a_1b_1c_3$ \\ $c_1 \neq c_3 \neq c_2$ \end{tabular}\\ \hline
            \begin{tabular}{@{ } c@{ }} $a_4b_1c_2$\\ $a_2 \neq a_4 \neq a_1$\\ $a_4 \neq a_3$ \end{tabular}& \begin{tabular}{@{ } c@{ }} $a_1b_2c_1$ \\ $b_1 \neq b_2$ \end{tabular} & \begin{tabular}{@{ } c@{ }} $a_3b_1c_1$\\ $a_1 \neq a_3 \neq a_2$ \end{tabular} & \begin{tabular}{@{ } c@{ }} $a_3b_1c_4$\\ $c_4 \neq c_1$ \end{tabular}\\ \cline{1,2,3,4}
            \begin{tabular}{@{ } c@{ }} $a_2b_5c_1$\\$b_5 \neq b_1$ \end{tabular}& & \begin{tabular}{@{ } c@{ }} $a_1b_3c_3$ \\ $b_3 \neq b_1$\end{tabular}& \\ \cline{1,3}
            &  \begin{tabular}{@{ } c@{ }} \phantom{$a_2 \neq a_1 = a_1$} \end{tabular} & \begin{tabular}{@{ } c@{ }} $a_3b_6c_4$ \phantom{a} \end{tabular} & \begin{tabular}{@{ } c@{ }}\phantom{f}\\\phantom{f} \end{tabular} \\ \cline{3,4}
             &   & \begin{tabular}{@{ } c@{ }} $a_4b_1c_5$\\ \end{tabular} & \begin{tabular}{@{ } c@{ }} $a_4b_1c_5'$\\ $c_5 \neq c_5'$ \end{tabular}\\ \cline{3,4}
             &   &  \begin{tabular}{@{ } c@{ }} $a_4b_7c_5'$ \phantom{a} \end{tabular} & \begin{tabular}{@{ } c@{ }}\phantom{$a_4b_1c_5'$}\\ \phantom{$c_5 \neq c_5'$}\end{tabular}\\ \cline{3,4}
             &   &   \begin{tabular}{@{ } c@{ }} $a_2b_5c_6$\\ \end{tabular} & \begin{tabular}{@{ } c@{ }} $a_2b_5c_6'$\\ $c_6 \neq c_6'$ \end{tabular}\\ \cline{3,4}
             &   &  \begin{tabular}{@{ } c@{ }} $a_2b_8c_6'$ \phantom{a} \end{tabular} & \begin{tabular}{@{ } c@{ }}\phantom{$a_4b_1c_5'$}\\ \phantom{$c_5 \neq c_5'$}\end{tabular} \\ \hline
        \end{tabular}
        \caption{Table for Case-1}
        \label{tab:countcase1}
    \end{table}

\section{A Combinatorial Lower Bound for the Randomness Complexity of Family~14}\label{app:family14}

\begin{lemma}{\label{lemma:counting2}}
    Any scheme for 3SS on the support $\cS = \{000,010,100,101\}$ has $\rc \geq 3$.
\end{lemma}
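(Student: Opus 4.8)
The plan is to argue combinatorially, in the spirit of Example~\ref{ex:family13}, that one message set is forced to be large. Writing $M_{\secrets}=\{\shares:P_{\Shares|\Secrets}(\shares|\secrets)>0\}$ as in that example, the map $r\mapsto\encoder(\secrets,r)$ sends the support of $P_R$ onto $M_{\secrets}$, so $|\cR|\geq|M_{\secrets}|$ and hence $\rc=\log|\cR|\geq\log\max_{\secrets\in\cS}|M_{\secrets}|$. It therefore suffices to exhibit a single secret with $|M_{\secrets}|\geq 8$, and I will aim for $\secrets=100$. Throughout I write a share triple as $abc$ with $a=\sab$, $b=\sbc$, $c=\sca$, so that $\pa$ sees $(a,c)$ and reconstructs $\xa$ via $\decodera$, $\pb$ sees $(b,a)$ and reconstructs $\xb$ via $\decoderb$, and $\pc$ sees $(c,b)$ and reconstructs $\xc$ via $\decoderc$.

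Two mechanisms drive the argument. \emph{Privacy propagation}: by \eqref{eq:privacy-Pi}, whenever $\secrets,\secrets'\in\cS$ agree in coordinate $i$ and $\shares\in M_{\secrets}$, some $\shares'\in M_{\secrets'}$ agrees with $\shares$ on the two symbols in $\party{i}$'s view (the third being free to change). For $\cS=\{000,010,100,101\}$ the coordinates partition the secrets as $\{000,010\},\{100,101\}$ (coordinate~$1$, freezing $a,c$); $\{000,100,101\},\{010\}$ (coordinate~$2$, freezing $b,a$); and $\{000,010,100\},\{101\}$ (coordinate~$3$, freezing $c,b$). \emph{Correctness distinctness}: if two triples share one symbol of a $\party{i}$-view but sit in $M_{\secrets},M_{\secrets'}$ with $\x_i\neq\x_i'$, then $\decoderplayer{i}$ forces their other symbol in that view to differ.

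Executing this, I would fix $a_1b_1c_1\in M_{000}$ and repeatedly apply privacy to cross between secrets, certifying each newly introduced symbol with correctness. For example, $\pb$-privacy sends $a_1b_1c_1$ to $a_1b_1c_2\in M_{100}$ and to $a_1b_1c_3\in M_{101}$ with $c_1,c_2,c_3$ pairwise distinct (using $\decodera$ for $c_1\neq c_2$ and $\decoderc$ for the rest); $\pc$- and $\pa$-privacy then inject further triples into $M_{100}$, each time with a fresh first or second symbol certified by the appropriate reconstruction map. The target is eight triples in $M_{100}$ whose first symbols lie in a set $\{a_1,a_2,a_3,a_4\}$, each value occurring exactly twice, and whose two representatives for a common first symbol carry distinct second symbols; the distinctness of $a_1,a_2,a_3$ and $a_4$ (save one pair, below) and of the paired second symbols all follow from $\decodera,\decoderb,\decoderc$. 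This is the bookkeeping recorded in Table~\ref{tab:countcase1}.

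The one place the argument genuinely stalls, and hence the main obstacle, is that not every distinctness I need is forced. The critical gap is between the $\sab$-symbol $a_3$ of an $M_{100}$-triple and the symbol $a_4$ of an $M_{000}$-triple generated over the same frozen partner symbols: since $000$ and $100$ agree in $\xb$, correctness of $\decoderb$ relates them not at all, and $\decodera$ is silent because the two triples carry different $c$-symbols. The proof must therefore branch. Table~\ref{tab:countcase1} is Case~$1$, where $a_3\neq a_4$; then the eight triples above are pairwise distinct and $|M_{100}|\geq 8$. The complementary Case~$2$, $a_3=a_4$, has to be handled on its own: the coincidence produces two triples $a_3b_1c_1\in M_{100}$ and $a_3b_1c_2\in M_{000}$ sharing their $\pb$-view, and I would exploit this extra structure to drive a parallel chain of privacy/correctness steps to eight distinct triples. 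Checking that this second branch also reaches eight is the crux, being the only step that is not a relabeling of Case~$1$.
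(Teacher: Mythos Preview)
Your overall strategy and your Case~1 match the paper's argument exactly (your eight $M_{100}$-triples with first symbols $a_1,a_2,a_3,a_4$ each appearing twice are precisely those recorded in Table~\ref{tab:countcase1}). The gap is entirely in Case~2, which you leave unexecuted and underestimate in two respects.

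First, the coincidence $a_3=a_4$ does not resolve into a single parallel chain. After a few more privacy/correctness steps starting from $a_3b_1c_2\in M_{000}$, another $\sab$-symbol is generated (from $\pc$-privacy applied to $a_1b_1c_2\in M_{100}$, landing in $M_{010}$) whose distinctness from $a_2$ is again not forced by any decoder. The paper therefore splits Case~2 into 2.1 and 2.2, and Case~2.2 branches once more (on yet another undetermined equality, this time between a new $\sab$-symbol and $a_3$) into 2.2.1 and 2.2.2. So there are four leaf cases in total, not two, and each requires its own bookkeeping table.

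Second, your commitment to the target $\secrets=100$ does not survive every branch: in Case~2.2.1 the paper collects eight distinct triples in $M_{101}$ rather than $M_{100}$, because the coincidences available in that sub-case route the distinctness certifications more cleanly through $\decoderc$ on $M_{101}$-entries. So the ``crux'' you flag is indeed the crux, but it is a three-level case tree rather than a single second branch, and one leaf needs a different target secret than the one you fixed.
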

\begin{proof}
    Let $M_{\Secrets} = \{\shares \text{ s.t. } P_{\Shares|\Secrets}(\shares|\secrets)>0\}$. We prove $\rc \geq 8$ by showing that $|M_{\secrets}| \geq 8$ for some $\secrets \in \cS$. Let $\salpab = \{a_1,a_2,\dots,a_{n_1}\}$, $\salpbc=\{b_1,b_2,\dots,b_{n_2}\}$, $\salpca = \{c_1,c_2,\dots,c_{n_3}\}$. We will denote a triple $(a_i,b_j,c_k)$ by $a_ib_jc_k$. Let some $a_1b_2c_3 \in \salp$ such that $\aone\bone\cone \in M_{000}$.

    Due to privacy against $\pb$ from \eqref{eq:privacy-Pi} for $i=3$, 
    $$P_{\Sb|\Secrets}(\aone\bone|000)=P_{\Sb|\Secrets}(\aone\bone|100) = P_{\Sb|\Secrets}(\aone\bone|101)$$ where this follows from the $\aone\bone\cone \in M_{000}$. Hence, there exist $\aone\bone\ctwo \in M_{100}$ and $\aone\bone\cthree \in M_{101}$. Note that $\cthree \neq \ctwo \neq \cone \neq \cthree$ by the following argument: For correctness at $\pa$, since, $\aone\bone\cone \in M_{000}$, $\decodera(\aone\cone)=0$. Also since $\aone\bone\ctwo \in M_{100}$ and $\aone\bone\cthree \in M_{101}$, for correctness at $\pa$, we must have $\decodera(\aone\ctwo)=1$ and $\decodera(\aone\cthree)=1$ respectively. This implies that $\cone \neq \ctwo$ and $\cone \neq \cthree$. Similarly, for correctness at $\pc$, due to $\aone\bone\ctwo \in M_{100}$ and $\aone\bone\cthree \in M_{101}$, $\decoderc(\bone\ctwo)=0$ and $\decoderc(\bone\cthree)=1$. This implies that $\ctwo\neq\cthree$. 
    We proceed using similar arguments and add entries to the sets $M_{\secrets}$ and observe that a scheme would necessarily need a necessary number of elements in at least one of these sets.

    \begin{enumerate}
        \item   Privacy against $\pc$ and $\aone\bone\cone \in M_{000}$ $\Rightarrow$ $\atwo\bone\cone \in M_{010}$ and $\athree\bone\cone \in M_{100}$ for some $\atwo$ and $\athree$. Since $\atwo\bone\cone \in M_{010}$ and $\athree\bone\cone \in M_{100}$, $\decoderb(\atwo\bone)=1$ and $\decoderb(\athree\bone)=0$. Also, since $\aone\bone\cone \in M_{000}$, $\decoder(\aone\bone)=0$ and $\decoder(\aone\cone)=0$. From all these, we have that $\aone \neq \atwo \neq \athree \neq \aone$. 

        \item   Privacy against $\pa$ and $\aone\bone\cone \in M_{000} \Rightarrow \aone\btwo\cone \in M_{010}$. Since $\decoderb(\aone\btwo)=1$ and $\decoderb(\aone\bone)=0$, $\bone \neq \btwo$.

        \item   Privacy against $\pc$ and $\aone\bone\ctwo \in M_{100} \Rightarrow a_i\bone\ctwo \in M_{000}$. Since $\decodera(\aone\ctwo)=1 (\because \aone\bone\ctwo \in M_{100}), a_i \neq \aone$. Since $\decoderb(\atwo\bone)=1 (\because \atwo\bone\cone \in M_{010})$, $a_i \neq \atwo$.
    \end{enumerate} 
    Going further, we divide the argument into two cases, one with $a_i \neq \athree$ and one with $a_i=\athree$ and continue adding the elements to the sets.  

    \textbf{Case 1 ($a_i \neq \athree$):} WLOG, assume that when $a_i = \afour$. So now, $\afour\bone\ctwo \in M_{000}$. In this case, $\aone,\atwo,\athree,\afour$, all are distinct.

    \begin{enumerate}
        \setcounter{enumi}{3}

        \item   Privacy against $\pa$ and $\aone\bone\cthree \in M_{101} \Rightarrow \aone\bthree\cthree \in M_{100}$ for some $\bthree$. Since $\decoder(\bone\cthree)=1 (\because \aone\bone\cthree \in M_{101})$ and $\decoder(\bthree\cthree)=0$, $\bthree \neq \bone$.
        
        \item   Privacy against $\pb$ and $\athree\bone\cone \in M_{100} \Rightarrow \athree\bone\cfour \in M_{101}$ for some $\cfour$. Since $\decoderc(\bone\cone) = 0 (\because \athree\bone\cone \in M_{100}$) and $\decoderc(\bone\cfour) = 1$, $\cfour \neq \cone$.

        \item   Privacy against $\pa$ and $\atwo\bone\cone \in M_{010} \Rightarrow \atwo\bfive\cone \in M_{000}$. Since $\decoderb(\atwo\bone) = 1 (\because \atwo\bone\cone \in M_{010})$ and $\decoderb(\atwo\bfive)=0$, $\bfive \neq \bone$. 
        
        \item   Privacy against $\pa$ and $\athree\bone\cfour \in M_{101} \Rightarrow \athree\bsix\cfour \in M_{100}$ for some $\bsix$.

        \item   Privacy against $\pb$ and $\afour\bone\ctwo \in M_{000}\\
        \Rightarrow \afour\bone\cfive \in M_{100}$ and $\afour\bone\cfive' \in M_{101}$ for some $c_5,c_5'$.\\$\Rightarrow \decoderc(\bone\cfive) = 0$ and $\decoderc(\bone\cfive')=1 \Rightarrow \cfive \neq \cfive'$.

        \item   Privacy against $\pa$ and $\afour\bone\cfive' \in M_{101} \Rightarrow \afour b_7 \cfive' \in M_{100}$ for some $b_7$.

        \item   Privacy against $\pb$ and $\atwo\bfive\cone \in M_{000}\\
        \Rightarrow \atwo\bfive\csix \in M_{100}$ and $\atwo\bfive\csix' \in M_{101}$ for some $\csix,\csix'$\\
        $\Rightarrow \decoderc(\bfive\csix)=0$ and $\decoderc(\bfive\csix')=1 \Rightarrow \csix \neq \csix'$.   

        \item   Privacy against $\pa$ and $\atwo\bfive\csix' \in M_{100} \Rightarrow \atwo b_8 \csix' \in M_{100}$ for some $b_8$. The argument is captured in table \ref{tab:countcase1}.
    \end{enumerate}

    Thus we have distinct entries 
    \begin{multline}
        \aone\bone\ctwo,\athree\bone\cone,\aone\bthree\cthree,\athree\bsix\cfour,\\\afour\bone\cfive,\afour b_7 \cfive', \atwo\bfive\csix,\atwo b_8 \csix' \in M_{100}
    \end{multline}
    This proves that in this case, we have $|M_{100}| \geq 8$.

    \textbf{Case 2 ($a_i = \athree$):}
    So, $\athree\bone\ctwo \in M_{000}$.
    \begin{enumerate}
        \setcounter{enumi}{3}
        \item   Privacy against $\pc$ and $\aone\bone\ctwo \in M_{100} \Rightarrow \afour\bone\ctwo \in M_{010}.$ Since $\decoderb(\aone\bone) = 0 (\because a_1b_1c_2 \in M_{100})$ and $\decoderb(\athree\bone)=0 (\because a_3b_1c_2 \in M_{000})$, $a_4 \neq a_3$ and $a_4 \neq a_1$. 
    \end{enumerate}

    \begin{table}
        \centering
        \begin{tabular}{|@{ }c@{ }|@{ }c@{ }|@{ }c@{ }|@{ }c@{ }|}
            \hline
            $M_{000}$ & $M_{010}$ & $M_{100}$ & $M_{101}$\\ \hline
            $a_1b_1c_1$ & \begin{tabular}{@{ } c@{ }} $a_2b_1c_1$\\ $a_2 \neq a_1$ \end{tabular} & \begin{tabular}{@{ } c@{ }} $a_1b_1c_2$\\ $c_2 \neq c_1$ \end{tabular} & \begin{tabular}{@{ } c@{ }}$a_1b_1c_3$ \\ $c_1 \neq c_3 \neq c_2$ \end{tabular}\\ \hline
            \begin{tabular}{@{ } c@{ }} $a_3b_1c_2$\\ $a_2 \neq a_3 \neq a_1$ \end{tabular}& \begin{tabular}{@{ } c@{ }} $a_1b_2c_1$ \\ $b_1 \neq b_2$ \end{tabular} & \begin{tabular}{@{ } c@{ }} $a_3b_1c_1$\\ $a_1 \neq a_3 \neq a_2$ \end{tabular} & \\ \hline
            \begin{tabular}{@{ } c@{ }} $a_2b_3c_1$ \end{tabular} & \begin{tabular}{@{ } c@{ }} $a_4b_1c_2$ \\ $a_3 \neq a_4 \neq a_2$\\ $a_4 \neq a_1$ \end{tabular} & \begin{tabular}{@{ } c@{ }} $a_2b_3c_4$ \end{tabular} & \begin{tabular}{@{ } c@{ }} $a_2b_3c_4'$\\$c_4 \neq c_4'$ \end{tabular}\\  \cline{1,2,3,4}
            \begin{tabular}{@{ } c@{ }} $a_4b_4c_2$ \end{tabular} & & \begin{tabular}{@{ } c@{ }} $a_2b_5c_4'$ \end{tabular} &\begin{tabular}{@{ } c@{ }} \phantom{$a_1b_7c_3$} \\ \phantom{$b_1 \neq b_7 \neq b_i$} \end{tabular}\\ \cline{1,3,4}
            & & \begin{tabular}{@{ } c@{ }} $a_4b_4c_5$ \end{tabular} & \begin{tabular}{@{ } c@{ }} $a_4b_4c_5'$\\ $c_4 \neq c_4'$ \end{tabular}\\ \cline{3,4}
            & & \begin{tabular}{@{ } c@{ }} $a_4b_6c_5'$ \end{tabular}& \begin{tabular}{@{ } c@{ }} \phantom{$a_1b_7c_3$} \\ \phantom{$b_1 \neq b_7 \neq b_i$} \end{tabular}\\ \cline{3,4}
            & & \begin{tabular}{@{ } c@{ }} $a_1b_7c_3$ \\ $b_1 \neq b_7$ \end{tabular} & \begin{tabular}{@{ } c@{ }} $a_3b_1c_6$ \end{tabular}\\ \cline{3,4}
            & \begin{tabular}{@{ } c@{ }} \phantom{$a_1b_7c_3$} \\ \phantom{$b_1 \neq b_7 \neq b_i$} \end{tabular} & \begin{tabular}{@{ } c@{ }} $a_3b_8c_6$\end{tabular} &\\\hline
        \end{tabular}
        \caption{Table for Case-2.1}
        \label{tab:countcase2}
    \end{table}

 We now analyze the two cases separately when $a_4 \neq a_2$ and when $a_4 = a_2$ in \textbf{Case 2.1} and \textbf{Case 2.2} respectively.
    
    \textbf{Case 2.1 ($a_4 \neq a_2$):} Note that due to previous arguments, $a_1,a_2,a_3$ and $a_4$ are all distinct here.
    
    \begin{enumerate}
        \setcounter{enumi}{4}
        \item   Privacy against $\pa$ and $\atwo\bone\cone \in M_{010} \Rightarrow \atwo\bthree\cone \in M_{000}$ for some $\bthree$.

        \item   Privacy against $\pa$ and $\afour\bone\ctwo \in M_{010}\ \Rightarrow \afour\bfour\ctwo \in M_{000}$ for some $\bfour$.
        
        \item   Privacy against $\pb$ and $\atwo\bthree\cone \in M_{000} \\
        \Rightarrow \atwo\bthree\cfour \in M_{100}$ and $\atwo\bthree\cfour' \in M_{101}$ for some $c_4,c_4'$\\ $\Rightarrow \decoderc(b_3c_4)=0$ and $\decoderc(b_3c_4')=1 \Rightarrow c_4 \neq c_4'$.

        \item   Privacy against $\pa$ and $a_2b_3c_4' \in M_{101} \Rightarrow a_2b_5c_4' \in M_{100}$ for some $b_5$.

        \item   Privacy against $\pb$ and $a_4b_4c_2 \in M_{000}$\\  $\Rightarrow a_4b_4c_5 \in M_{100}$ and $a_4 b_4 c_5' \in M_{101}$ for some $c_5,c_5'$\\ $\Rightarrow \decoderc(b_4c_5)=0$ and $\decoderc(b_4c_5')=1 \Rightarrow c_5 \neq c_5'$. 

        \item   Privacy against $\pa$ and $a_4b_4c_5' \in M_{101} \Rightarrow a_4b_6c_5' \in M_{100}$ for some $b_6$.

        \item   Privacy against $\pa$ and $a_1b_1c_3 \in M_{101} \Rightarrow a_1b_7c_3 \in M_{100}$. Since $\decoderc(b_1c_3)=1 (\because a_1b_1c_3 \in M_{101})$ and $\decoderc(b_7c_3)=0 (\because a_1b_7c_3 \in M_{100})$, $b_1 \neq b_7$.

        \item   Privacy against $\pb$ and $a_3b_1c_1 \in M_{100} \Rightarrow a_3b_1c_6 \in M_{101}$ for some $c_6$. Since $\decoderc(b_1c_1) = 0 (\because a_3b_1c_1 \in M_{100})$ and $\decoder(b_1c_6)=1$, $c_1 \neq c_6$.
        
        \item   Privacy against $\pa$ and $a_3b_1c_6 \in M_{101} \Rightarrow a_3b_8c_6 \in M_{100}$ for some $b_8$. 
    \end{enumerate}
    Thus we have distinct entries 
    \begin{multline}
        a_1b_1c_2, a_3b_1c_1, a_2b_3c_4, a_2b_5c_4',\\ a_4b_4c_5, a_4b_6c_5', a_1b_7c_3, a_3b_8c_6 \in M_{100}.
    \end{multline}
    This proves that in this case, we have $|M_{100}| \geq 8$. The argument is captured in table \ref{tab:countcase2}.

    \textbf{Case 2.2:}
    Since $a_4=a_2$, we now have that $a_2b_1c_2 \in M_{010}$. We will use the notation $a_2$ in place of $a_4$ from here onwards.

    \begin{enumerate}
        \setcounter{enumi}{4}
        \item   Privacy against $\pa$ and $a_1b_1c_2 \in M_{100} \Rightarrow a_1b_3c_2 \in M_{101}$. Since $\decoderc(b_1c_2)=0 (\because a_1b_1c_2 \in M_{100})$ but $\decoderc(b_3c_2)=1$, $b_1 \neq b_3$. Since $\decoderb(a_1b_2)=1 (\because a_1b_2c_1 \in M_{010})$ but $\decoderb(a_1b_3)=0$, $b_3 \neq b_2$.

        \item   Privacy against $\pc$ and $a_1b_2c_1 \in M_{010} \Rightarrow a_5b_2c_1 \in M_{100}$ for some $a_5$. Since $\decoderb(a_1b_2)=1 (\because a_1b_2c_1 \in M_{010})$ but $\decoderb(a_5b_2)=0, a_1 \neq a_5$. 
        
        $\decodera(a_2c_1)=0 (\because a_2b_1c_1 \in M_{010})$ but $\decodera(a_5c_1)=1$, $a_5 \neq a_2$. 
    \end{enumerate}

    \begin{table}[hb!]
        \centering
        \begin{tabular}{|@{ }c@{ }|@{ }c@{ }|@{ }c@{ }|@{ }c@{ }|}
            \hline
            $M_{000}$ & $M_{010}$ & $M_{100}$ & $M_{101}$\\ \hline
            $a_1b_1c_1$ & \begin{tabular}{@{ } c@{ }} $a_2b_1c_1$\\ $a_2 \neq a_1$ \end{tabular} & \begin{tabular}{@{ } c@{ }} $a_1b_1c_2$\\ $c_2 \neq c_1$ \end{tabular} & \begin{tabular}{@{ } c@{ }}$a_1b_1c_3$ \\ $c_1 \neq c_3 \neq c_2$ \end{tabular}\\ \hline
            \begin{tabular}{@{ } c@{ }} $a_3b_1c_2$ \end{tabular}& \begin{tabular}{@{ } c@{ }} $a_1b_2c_1$ \\ $b_1 \neq b_2$ \end{tabular} & \begin{tabular}{@{ } c@{ }} $a_3b_1c_1$\\ $a_1 \neq a_3 \neq a_2$ \end{tabular} & \begin{tabular}{@{ } c@{ }} $a_1b_3c_2$\\ $b_2 \neq b_3\neq b_1$ \end{tabular}\\ \hline
            \begin{tabular}{@{ } c@{ }} \end{tabular}  & $a_2b_1c_2$ & \begin{tabular}{@{ } c@{ }} $a_5b_2c_1$\\ $a_1 \neq a_5 \neq a_2$\\ $a_5 \neq a_3$ \end{tabular} & $a_3b_4c_1$ \\ \cline{2,3,4}
            \begin{tabular}{@{ } c@{ }} \phantom{$a_1b_7c_3$} \\ \phantom{$b_1 \neq b_7 \neq b_i$} \end{tabular} & & & \begin{tabular}{@{ } c@{ }} $a_3b_1c_4$\\ $c_4 \neq c_1$ \end{tabular}\\ \cline{1,4}
            \begin{tabular}{@{ } c@{ }} $a_5b_2c_5'$\\ $c_5 \neq c_5'$ \end{tabular}& & & $a_5b_2c_5$\\ \cline{1,4}
            $a_2b_5c_1$ & & & \begin{tabular}{@{ } c@{ }} $a_5b_6c_1$\\ $b_6 \neq b_2$ \end{tabular} \\ \cline{1,3,4}
            &  & $a_2b_5c_6$ & \begin{tabular}{@{ } c@{ }} $a_2b_5c_6'$\\ $c_6 \neq c_6'$\end{tabular}\\ \cline{3,4}
            \begin{tabular}{@{ } c@{ }} \phantom{$a_1b_7c_3$} \\ \phantom{$b_1 \neq b_7 \neq b_i$} \end{tabular}& \begin{tabular}{@{ } c@{ }} \phantom{$a_1b_7c_3$} \\ \phantom{$b_1 \neq b_7 \neq b_i$} \end{tabular}& & \begin{tabular}{@{ } c@{ }} $a_2b_7c_6$ \end{tabular} \\ \hline 
        \end{tabular}
        \caption{Table for Case-2.2.1}
        \label{tab:countcase3}
    \end{table}

    We now again split the analysis into two cases, \textbf{Case 2.2.1} where $a_5 \neq a_3$ and the other \textbf{Case 2.2.2} where $a_5 = a_3$

    \textbf{Case 2.2.1 ($a_5 \neq a_3$):}: So now, $a_5b_2c_1 \in M_{100}$. Note that due to previous arguments, $a_1,a_2,a_3$ and $a_5$ are all distinct.

    \begin{enumerate}
        \setcounter{enumi}{6}
        \item   Privacy against $\pa$ and $a_3b_1c_1 \in M_{100} \Rightarrow a_3b_4c_1 \in M_{101}$ for some $b_4$.

        \item   Privacy against $\pb$ and $a_3b_1c_1 \in M_{100} \Rightarrow a_3b_1c_4 \in M_{101}$ for some $c_4$. Since $\decoderc(b_1c_1)=0 (\because a_3b_1c_1 \in M_{100})$ but $\decoderc(b_1c_4)=1$, $c_1 \neq c_4$.

        \item   Privacy against $\pb$ and $a_5b_2c_1 \in M_{100}$\\ $\Rightarrow a_5b_2c_5 \in M_{101}$ and $a_5b_2c_5' \in M_{000}$ for some $c_5,c_5$'\\$\Rightarrow \decoderc(b_2c_5)=1$ and $\decoderc(b_2c_5')=0 \Rightarrow c_5 \neq c_5'$.

        \item   Privacy against $\pa$ and $a_5b_2c_1 \in M_{100} \Rightarrow a_5b_6c_1 \in M_{101}$ for some $b_6$. Since $\decoderc(b_2c_1)=0$ and $\decoder(b_6c_1)=1$, $b_2 \neq b_6$.

        \item   Privacy against $\pa$ and $a_2b_1c_1 \in M_{010} \Rightarrow a_2b_5c_1 \in M_{000}$ for some $b_5$. 
        
        \item   Privacy against $\pb$ and $a_2b_5c_1 \in M_{000}$\\ $\Rightarrow a_2b_5c_6 \in M_{100}$ and $a_2b_5c_6' \in M_{101}$ for some $c_6$, $c_6'$\\ $\Rightarrow \decoderc(b_5c_6)=0$ and $\decoderc(b_5c_6')=1 \Rightarrow c_6 \neq c_6'$.  

        \item   Privacy against $\pa$ and $a_2b_5c_6 \in M_{100} \Rightarrow a_2b_7c_6 \in M_{101}.$
    \end{enumerate}
    Thus we have distinct entries 
    \begin{multline}
        a_1b_1c_3, a_1b_3c_2, a_3b_4c_1, a_3b_1c_4,\\ a_5b_2c_5, a_5b_6c_1, a_2b_5c_6', a_2b_7c_6 \in M_{101} 
    \end{multline}
    So we have $|M_{101}| \geq 8$ in this case. The argument is captured in table \ref{tab:countcase3}

    \begin{table}[h!]
        \begin{tabular}{|@{ }c@{ }|@{ }c@{ }|@{ }c@{ }|@{ }c@{ }|}
            \hline
            $M_{000}$ & $M_{010}$ & $M_{100}$ & $M_{101}$\\ \hline
            $a_1b_1c_1$ & \begin{tabular}{@{ } c@{ }} $a_2b_1c_1$\\ $a_2 \neq a_1$ \end{tabular} & \begin{tabular}{@{ } c@{ }} $a_1b_1c_2$\\ $c_2 \neq c_1$ \end{tabular} & \begin{tabular}{@{ } c@{ }}$a_1b_1c_3$ \\ $c_1 \neq c_3 \neq c_2$ \end{tabular}\\ \hline
            \begin{tabular}{@{ } c@{ }} $a_3b_1c_2$ \end{tabular}& \begin{tabular}{@{ } c@{ }} $a_1b_2c_1$ \\ $b_1 \neq b_2$ \end{tabular} & \begin{tabular}{@{ } c@{ }} $a_3b_1c_1$\\ $a_1 \neq a_3 \neq a_2$ \end{tabular} & \begin{tabular}{@{ } c@{ }} $a_1b_3c_2$\\ $b_2 \neq b_3\neq b_1$ \end{tabular}\\ \hline
            \begin{tabular}{@{ } c@{ }} \phantom{$a_1b_7c_3$} \\ \phantom{$b_1 \neq b_7 \neq b_i$} \end{tabular}  & $a_2b_1c_2$ & \begin{tabular}{@{ } c@{ }} $a_3b_2c_1$ \end{tabular} &  \\ \cline{1,2,3}
            $a_6b_2c_1$& \begin{tabular}{@{ } c@{ }} $a_3b_4c_2$\\ $b_1 \neq b_4 \neq b_2$\\ $b_4 \neq b_3$ \end{tabular} & \begin{tabular}{@{ } c@{ }} $a_6b_2c_4$\\ $c_1 \neq c_4$ \end{tabular} & \\ \cline{1,2,3}
            \begin{tabular}{@{ } c@{ }} $a_7b_4c_2$ \end{tabular} & & \begin{tabular}{@{ } c@{ }} $a_7'b_4c_2$\\ $a_7 \neq a_7'$ \end{tabular} &\\ \cline{1,3}
            \begin{tabular}{@{ } c@{ }} \phantom{$a_1b_7c_3$} \\ \phantom{$b_1 \neq b_7 \neq b_i$} \end{tabular}& & \begin{tabular}{@{ } c@{ }} $a_7b_4c_5$ \end{tabular} & \\ \cline{1,3}
            \begin{tabular}{@{ } c@{ }} $a_1b_3c_6'$\\ $c_6 \neq c_6'$ \end{tabular}& & $a_1b_3c_6$ & \\ \cline{1,3}
            \begin{tabular}{@{ } c@{ }} \phantom{$a_1b_7c_3$} \\ \phantom{$b_1 \neq b_7 \neq b_i$} \end{tabular}& & \begin{tabular}{@{ } c@{ }} $a_8b_3c_6'$ \end{tabular} & \\ \hline
        \end{tabular}
        \caption{Table for Case-2.2.2}
        \label{tab:countcase4}
    \end{table}

    \textbf{Case 2.2.2 ($a_5 = a_3$)}: So, now, $a_3b_2c_1 \in M_{100}$. 

    \begin{enumerate}
        \setcounter{enumi}{6}
        \item   Privacy against $\pa$ and $a_3b_1c_2 \in M_{000} \Rightarrow a_3b_4c_2 \in M_{010}$ for some $b_4$. Since $\decoderb(a_3b_1)=0 (\because a_3b_1c_2 \in M_{000})$ but $\decoderb(a_3b_4)=1$, $b_4 \neq b_1$. Similarly, since $\decoderb(a_3b_2)=0 (\because a_3b_2c_1 \in M_{100}), b_4 \neq b_2$. Since $\decoderc(b_3c_2)=1 (\because a_1b_3c_2 \in M_{101})$ but $\decoderc(b_4,c_2)=0$, $b_4 \neq b_3$.

        Note that due to arguments in previous steps, we have $b_1,b_2,b_3$ and $b_4$ all distinct (refer to steps (2) and step (5) of Case 2.2).

        \item   Privacy against $\pc$ and $a_1b_2c_1 \in M_{010} \Rightarrow a_6b_2c_1 \in M_{000}$ for some $a_6$.

        \item   Privacy against $\pb$ and $a_6b_2c_1 \in M_{000} \Rightarrow a_6b_2c_4 \in M_{100}$. Since $\decodera(a_6c_1)=0 (\because a_6b_2c_1 \in M_{000})$ but $\decodera(a_6c_4)=1$, $c_1 \neq c_4$. 

        \item   Privacy against $\pc$ and $a_3b_4c_2 \in M_{010}$\\ $\Rightarrow a_7b_4c_2 \in M_{000}$ and $a_7'b_4c_2 \in M_{100}$ for some $a_7,a_7'$\\
        $\Rightarrow \decodera(a_7c_2) = 0$ and $\decodera(a_7'c_2)=1 \Rightarrow a_7 \neq a_7'$.

        \item   Privacy against $\pb$ and $a_7b_4c_2 \in M_{000} \Rightarrow a_7b_4c_5 \in M_{100}$ for some $c_5$.

        \item   Privacy against $\pb$ and $a_1b_3c_2 \in M_{101}$\\$\Rightarrow a_1b_3c_6 \in M_{100}$ and $a_1b_3c_6' \in M_{000}$\\$\Rightarrow \decodera (a_1c_6)=1$ and $\decodera(a_1c_6')=0 \Rightarrow c_6 \neq c_6'$.
        
        \item   Privacy against $\pc$ and $a_1b_3c_6' \in M_{000} \Rightarrow a_8b_3c_6' \in M_{100}$ for some $a_8$. 
    \end{enumerate}
    Thus we have distinct entries 
    \begin{multline}
        a_1b_1c_2, a_3b_1c_1, a_3b_2c_1,a_6b_2c_4,\\
        a_7'b_4c_2, a_7b_4c_5, a_1b_3c_6, a_8b_3c_6' \in M_{100} 
    \end{multline}
    This gives $|M_{100}| \geq 8$ for this case as well. Table \ref{tab:countcase4} captures this argument.

    Since we have $|M_{\secrets}| \geq 8$ for every case for some $\secrets \in \cS$, the randomness complexity $\log|\cR| \geq 3$. 
\end{proof}

\input{cubes}

\end{document}